\newtheorem{defn}[thm]{Definition}
\newtheorem{prop}[thm]{Proposition}
\newcommand{\starriness}{s}
\newcommand{\merginess}{m}
\newcommand{\splitsto}{\preceq}
\newcommand{\mergesto}{\succeq}
\newcommand{\melt}{\mathrm{melt}}
\newcommand{\pbasis}{\mathcal B}
\newcommand{\mtypes}{M}
\title{Thermodynamically Driven Signal Amplification} 
\author{Joshua {Petrack}}{University of California--Davis, Davis, CA, USA}{jgpetrack@ucdavis.edu}{https://orcid.org/0009-0000-9088-9437}{NSF grants 1900931 and 1844976.}
\author{David {Soloveichik}}{University of Texas at Austin, Austin, TX, USA \and \url{https://users.ece.utexas.edu/~soloveichik/}}{david.soloveichik@utexas.edu}{https://orcid.org/0000-0002-2585-4120}{NSF grant 1901025, Sloan Foundation Research Fellowship.}
\author{David {Doty}}{University of California--Davis, Davis, CA, USA \and \url{https://web.cs.ucdavis.edu/~doty/}}{doty@ucdavis.edu}
{https://orcid.org/0000-0002-3922-172X}
{NSF grants 2211793, 1900931, and 1844976.}
\authorrunning{J. Petrack, D. Soloveichik, and D. Doty} 
\keywords{Thermodynamic binding networks, signal amplification, integer programming}
\begin{document}

\maketitle

\begin{abstract}
The field of chemical computation attempts to model computational behavior that arises when molecules, typically nucleic acids, are mixed together. By modeling this physical phenomenon at different levels of specificity, different operative computational behavior is observed. Thermodynamic binding networks (TBNs) is a highly abstracted model that focuses on which molecules are bound to each other in a ``thermodynamically stable'' sense. Stability is measured based only on how many bonds are formed and how many total complexes are in a configuration, without focusing on how molecules are binding or how they became bound.
By defocusing on kinetic processes, TBNs attempt to naturally model the long-term behavior of a mixture (i.e., its thermodynamic equilibrium). 

We study the problem of \emph{signal amplification}: detecting a small quantity of some molecule and amplifying its signal to something more easily detectable.
This problem has natural applications such as disease diagnosis.
By focusing on thermodynamically favored outcomes, we seek to design chemical systems that perform the task of signal amplification robustly without relying on kinetic pathways that can be error prone and require highly controlled conditions (e.g., PCR amplification).

It might appear that a small change in concentrations can  result in only small changes to the thermodynamic equilibrium of a molecular system.
However, we show that it is possible to design a TBN that can ``exponentially amplify'' a signal represented by a single copy of a monomer called the \emph{analyte}:
this TBN has exactly one stable state before adding the analyte and exactly one stable state afterward, and those two states ``look very different'' from each other. In particular, their difference is exponential in the number of types of molecules and their sizes. 
The system can be programmed to any desired level of resilience to false positives and false negatives. 
To prove these results, we introduce new concepts to the TBN model, particularly the notions of a TBN's entropy gap to describe how unlikely it is to be observed in an undesirable state, and feed-forward TBNs that have a strong upper bound on the number of polymers in a stable configuration.

We also show a corresponding negative result: 
a \emph{doubly} exponential upper bound, meaning that there is no TBN that can amplify a signal by an amount more than doubly exponential in the number and sizes of different molecules that comprise it.
We leave as an open question to close this gap by either proving an exponential upper bound,
or giving a construction with a doubly-exponential difference between the stable configurations before and after the analyte is added.

Our work informs the fundamental question of how a thermodynamic equilibrium can change as a result of a small change to the system (adding a single molecule copy). 
While exponential amplification is traditionally viewed as inherently a non-equilibrium phenomenon, we find that in a strong sense exponential amplification can occur at thermodynamic equilibrium as well---where the ``effect'' (e.g., fluorescence) is exponential in types and complexity of the chemical components.
\end{abstract}

\thispagestyle{empty}
\clearpage
\pagenumbering{arabic}

\section{Introduction}
\label{sec:intro}

Detecting a small amount of some chemical signal, or analyte, is a fundamental problem in the field of chemical computation. The current state-of-the-art in nucleic acid signal amplification is the polymerase chain reaction (PCR)\cite{schochetman1988pcr}.
By using a thermal cycler,
PCR repeatedly doubles the amount of the DNA strand that is present.
One downside is the need for a PCR machine, which is expensive and whose operation can be time-consuming.
The advantages of PCR are that it can reliably detect even a single copy of the analyte if enough doubling steps are taken, and it is fairly (though not perfectly) robust to incorrect results.
\todo{DS: It's not clear to me that false-negatives are a bigger problem than false-positives in PCR. Is this well-known?
JP: not sure, this was a line Dave contributed :) I'm just going to remove it since it's not especially important to any point we're making.}
Recent work in DNA nanotechnology achieves ``signal amplification'' through other kinetic processes involving pure (enzyme-free) DNA systems, such as hybridization chain reaction (HCR)~\cite{choi2018third}, classification models implemented with DNA~\cite{lopez2018molecular}, hairpin assembly cascades~\cite{xiong2021minimizing}, and ``crisscross'' DNA assembly~\cite{minev2021robust}. 

Although highly efficacious, 
PCR and these other techniques essentially rely on \emph{kinetic} control of chemical events, and the thermodynamic equilibria of these systems are not consistent with their desired output.
Can we design a system so that, if the analyte is present, the thermodynamically most stable state of the system looks one way, and if the analyte is absent, the thermodynamically most stable state looks ``very different''
(e.g., many fluorophores have been separated from quenchers)?
Besides answering a fundamental chemistry question, 
such a system is potentially more robust to false positives and negatives.
It also can be simpler and cheaper to operate: for many systems, heating up the system and cooling it down slowly reaches the system's thermodynamic equilibrium.

We tackle this problem of signal detection in the formal model of Thermodynamic Binding Networks (TBNs)~\cite{doty2017tbn,breik2019computing}.
The TBN model of chemical computation ignores kinetic and geometric constraints in favor of focusing purely on configurations describing which molecules are bound to which other molecules. 
A TBN yields a set of stable configurations, the ways in which monomers (representing individual molecules, typically strands of DNA) are likely to be bound together in thermodynamic equilibrium. A TBN performs the task of signal amplification if its stable configurations, and thus the states in which it is likely to be observed at equilibrium, change dramatically in response to adding a single monomer. 
TBNs capture a notion of what signal amplification can look like for purely thermodynamic chemical systems, without access to a process like PCR that repeatedly changes the conditions of what is thermodynamically favorable. 

This paper asks the question: if we add a single molecule to a pre-made solution, how much can that change the solution's thermodynamic equilibrium? 
To make the question quantitative, we define a notion of distance between thermodynamic equilibria,
and we consider scaling with respect to meaningful complexity parameters. 
First, we require an upper limit on the size of molecules in the solution and the analyte, as adding a single very large molecule can trivially affect the entire solution. Large molecules are also expensive to synthesize, and for natural signal detection the structure of the analyte is not under our control. 
Second, we require an upper limit on how many different types of molecules are in the solution, as it is expensive to synthesize new molecular species (though synthesizing many copies is more straightforward).

Our main result is the existence of a family of TBNs that amplify signal exponentially.
In these TBNs, there are exponentially many free ``reporter'' monomers compared to the number of types of monomers and size of monomers.
In the absence of the analyte,
this TBN has a unique stable configuration in which all reporter monomers are bound.
When a single copy of the analyte is added, the resulting TBN has a unique stable configuration in which all reporter monomers are unbound.
These TBNs are parameterized by two values: the first is the amplification factor, determining how many total reporter molecules are freed.
The second is a value we call the system's ``entropy gap'', which determines how thermodynamically unfavorable a configuration of the system would need to be in order for reporters to be spuriously unbound in the absence of the analyte (false positive) or spuriously bound in its presence (false negative).

We also show a corresponding doubly exponential upper bound on the signal amplification problem in TBNs: that given any TBN, adding a single monomer can cause at most a doubly exponential change in its stable configurations. 
We leave as an open question to close this gap:
either proving an exponential upper bound, or giving a TBN with a doubly-exponential amplification factor.

Our work can be compared to prior work on signal amplification that exhibits kinetic barriers.
For example, in reference \cite{minev2021robust}, a detected analyte serves as a seed initiating self-assembly of an arbitrarily long linear polymer. 
In the absence of the analyte, an unlikely kinetic pathway is required for spurious nucleation of the polymer to occur.
However, in that system, false positive configurations are still thermodynamically favorable; if a critical nucleus is able to overcome the kinetic barrier and assemble, then growth of the infinite polymer is equally favorable as from the analyte.
In contrast, in our system, there are \emph{no} kinetic paths, however unlikely, that lead to an undesired yet thermodynamically favored configuration.

\begin{figure}
    \centering
    \includegraphics[width=0.5\columnwidth]{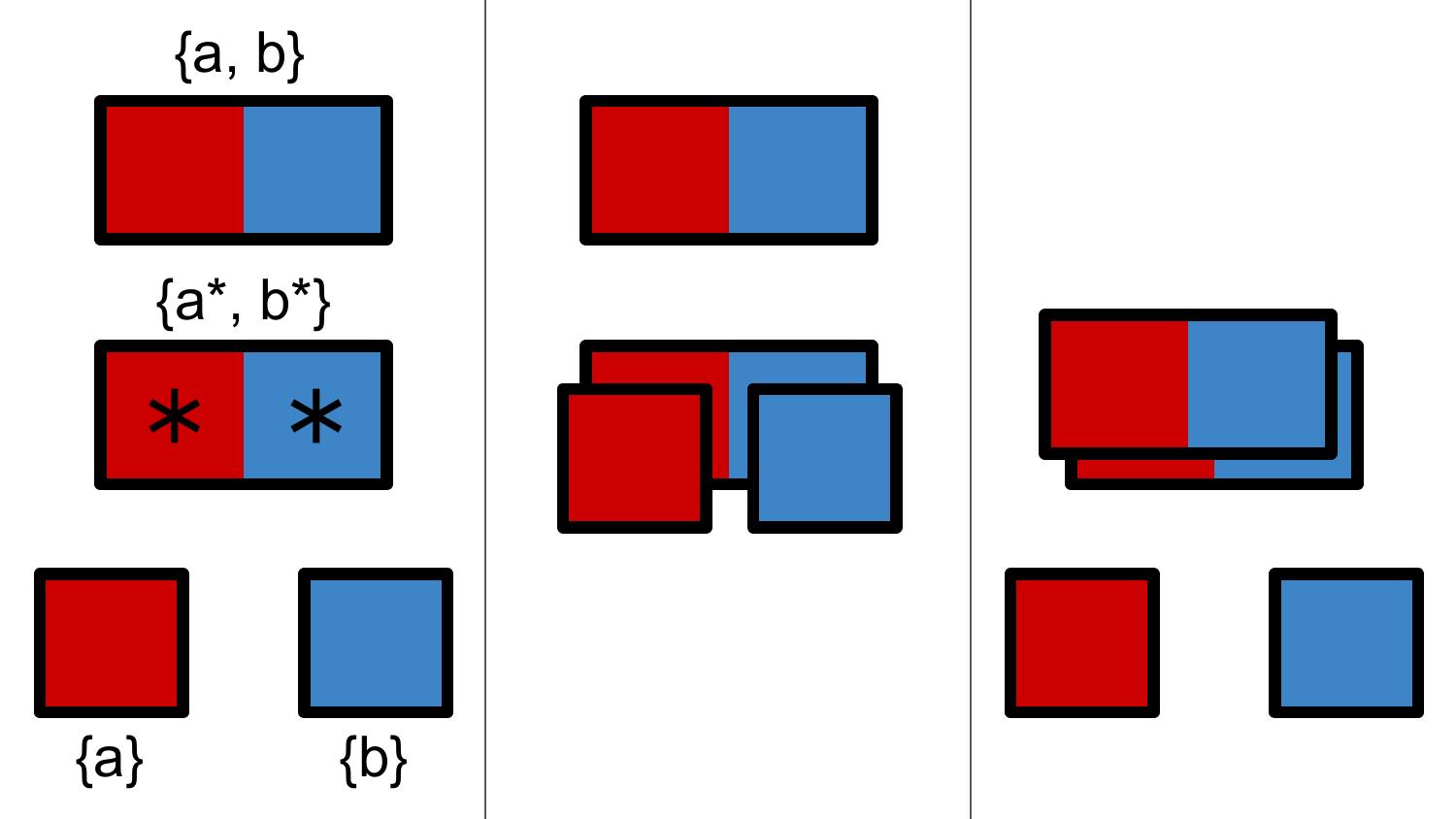}
    \caption{A simple thermodynamic binding network $T$ with four monomers. Site types are differentiated by color. Bonds are shown by juxtaposing monomers so that unstarred sites cover starred sites. Left: the all-singleton configuration $\melt(T)$ with four polymers. Monomers are labeled by their formal identities for reference. Middle: a configuration with two polymers. As all starred sites are covered, the configuration is saturated. Right: a configuration with three polymers. As this has the most possible polymers for a saturated configuration, it is stable.}
    \label{fig:simple_tbn}
\end{figure}

\section{Definitions}
\label{sec:definitions}
\subsection{General TBN Definitions}

A \emph{site type} is a formal symbol such as $a$, and has a \emph{complementary} type, denoted $a^*$, with the interpretation that $a$ binds to $a^*$ (e.g., they could represent complementary DNA sequences). 
We also refer to site types as domain types, and sites as domains. 
We call a site type such as $a$ an \emph{unstarred} site type, and $a^*$ a \emph{starred} site type. 
A \emph{monomer type} is a multiset of site types (e.g., a DNA strand consisting of several binding domains); for example monomer type $\vec{m} = \{a,a,a,b,c^*\}$ has three copies of site $a$, one of site $b$, and one of site $c^*$.
A \emph{TBN}~\cite{doty2017tbn,breik2021kinetic} is a multiset of monomer types. 
We call an instance of a monomer type a \emph{monomer} and an instance of a site type a \emph{site}. 

We take a convention that, unless otherwise specified, TBNs are \emph{star-limiting}: for each site type, there are always at least as many sites of the unstarred type as the starred type among all monomers. Given a TBN, this can always be enforced by renaming site types to swap unstarred and starred types, which simplifies many of the definitions below.

A \emph{configuration} of a TBN is a partition of its monomers into submultisets called \emph{polymers}.
We say that a site type (or a site) on a polymer is \emph{uncovered} if, among the monomers in that polymer, there are more copies of the starred version of that site type than the unstarred version (otherwise \emph{covered}).
A polymer is \emph{self-saturated} if it has no uncovered site types.
A configuration is \emph{saturated} if all its polymers are self-saturated.  A configuration $\alpha$ of a TBN $T$ is \emph{stable} if it is saturated, and no saturated configuration of $T$ has more polymers than $\alpha$. \cref{fig:simple_tbn} shows an example TBN. 

An equivalent characterization of stability is in terms of merges rather than polymer counts. We say that a \emph{merge} is the process of taking two polymers in a configuration and making a new configuration by joining them into one polymer; likewise a \emph{split} is the process of taking one polymer in a configuration and making a new configuration by splitting it into two polymers. Maximizing the number of polymers in a saturated configuration is equivalent to minimizing the number of merges of two polymers necessary to reach a saturated configuration. To this end, some additional notation:

\begin{defn}
    The \emph{distance to stability} of a saturated configuration $\sigma$ is the number of (splits minus merges) necessary to get from $\sigma$ to a stable configuration.
\end{defn} 
Note that this number will be the same for any path of splits and merges, as all stable configurations have the same number of polymers. 

Equivalently, distance to stability is the number of polymers in a stable configuration minus the number of polymers in $\sigma$. We only consider this value for saturated configurations to ensure it is positive and because we may interpret it as a measure of how unlikely we are to observe the network in a given state under the assumption that enthalpy matters infinitely more than entropy. 

The following definitions are not restricted to saturated configurations.

\begin{defn}
    Given a TBN $T$, we say that the all-melted configuration, denoted $\melt(T)$, is the configuration in which all monomers are separate.
\end{defn} 

\begin{defn}
    Given a configuration $\alpha$ in a TBN $T$, its \emph{merginess} $\merginess(\alpha)$ is the number of merges required to get from $\melt(T)$ to $\alpha$ (or equivalently, the number of monomers in $T$ minus the number of polymers in $\alpha$).
\end{defn} 

\begin{defn}
    Given a configuration $\alpha$ in a TBN $T$, its \emph{starriness} $\starriness(\alpha)$ is the number of polymers in $\alpha$ which contain at least one uncovered starred site.
\end{defn} 

We observe that $\alpha$ is saturated if and only if $\starriness(\alpha) = 0$. 

\begin{defn}
    Given configurations $\alpha$ and $\beta$ in a TBN $T$, we say $\alpha \splitsto \beta$ (equivalently, $\beta \mergesto \alpha$) if it is possible to reach $\beta$ from $\alpha$ solely by splitting polymers zero or more times. 
\end{defn} 

We read $\alpha \splitsto \beta$ as ``$\alpha$ splits to $\beta$''. Observe that if $\alpha \mergesto \beta$, then we can reach $\beta$ from $\alpha$ in exactly $\merginess(\beta) - \merginess(\alpha)$ merges. In general, we may order the merges required to go from one configuration to another in whatever way allows the easiest analysis.

\subsection{Comparing TBNs}

We need some notion of how ``different'' two TBNs are, so that we can quantify how much a TBN changes after adding a single monomer.

\begin{defn}[distance between configurations]
Let $\alpha$ and $\beta$ be two configurations of a TBN, or of two TBNs using the same monomer types. We say that the \emph{distance} $d(\alpha,\beta)$ between them is the $L^1$ distance between the vectors of their polymer counts. That is, it is the sum over all types of polymers of the difference between how many copies of that polymer are in $\alpha$ and in $\beta$.
\end{defn}

\begin{defn}[distance between TBNs]
\label{def:distance}
Given TBNs $T$ and $T'$, let $\mathcal C$ and $\mathcal C'$ be their stable configurations. Define the distance between $T$ and $T'$ as 
\begin{equation}
    d(T, T') = \min_{\alpha \in \mathcal C, \alpha' \in \mathcal C'} d(\alpha, \alpha').
\end{equation}
\end{defn}

Note that this distance is not a metric.\footnote{
    In particular, it fails to satisfy the triangle inequality, 
    since $T$ could have a stable configuration close to one of $T'$, so $d(T,T')=1$,
    and $T'$ could have a different stable configuration close to one of $T''$, so $d(T',T'')=1$,
    but $T$ and $T''$ could have no close stable configurations, so $d(T,T'') > 2$.
}
Rather, it is a way to capture how easily we can distinguish between two TBNs; even the closest stable configurations of $T$ and $T'$ have distance $d(T, T')$, so we should be able to distinguish any stable configuration of one of them from all stable configurations of the other by that amount. 

Note that this condition does not directly imply a stronger ``experimentally verifiable'' notion of distance, namely that there is some ``reporter'' monomer which is always bound in one TBN and always free in the other. However, the system we exhibit in this paper does also satisfy this stronger condition. We focus on the distance given here, as it is more theoretically general and our upper bound result in \cref{sec:upperlimit} apply to it.

We also need a notion of how likely we are to observe a configuration of a TBN that is not stable, in order to have a notion of the system being robust to random noise. If a TBN has one stable configuration but many other configurations that are nearly stable, we would expect to observe it in those configurations frequently, meaning that in practice we may not be able to discern what the stable configuration is as easily. 

We work under the assumption that enthalpy matters infinitely more than entropy, so that we may assume that only saturated configurations need to be considered. This assumption is typical for the TBN model, and can be accomplished practically by designing binding sites to be sufficiently strong. Under this paradigm, a configuration's distance to stability is a measure of how unlikely we are to observe it. This motivates the following definition:

\begin{defn}[entropy gap]
    \label{def:entropygap}
    Given a TBN $T$, we say that it has an \emph{entropy gap} of $k$ if, for any saturated configuration $\alpha$ of $T$, one of the following is true:
    \begin{enumerate}
        
        \item $\alpha$ is stable.
        \item There exists some stable configuration $\beta$ such that $\alpha \splitsto \beta$. 
        \item $\alpha$ has distance to stability at least $k$. 
    \end{enumerate}
\end{defn}

Note that by this definition, all TBNs trivially have an entropy gap of one. Note as well that stable configurations are technically also included in the second condition by choosing $\beta = \alpha$, but we list them separately for emphasis.

The second condition is necessary in this definition because any TBN necessarily has some configurations that have distance to stability one, simply by taking a stable configuration and arbitrarily merging two polymers together. These configurations are unavoidable but are not likely to be problematic in a practical implementation, because a polymer in such a configuration should be able to naturally split itself without needing to interact with anything else---these configurations will never be local energy minima. Reference \cite{breik2021kinetic} discusses self-stabilizing TBNs in which \emph{all} saturated configurations have this property, equivalent to an entropy gap of $\infty$.

\subsection{Feed-Forward TBNs}
\begin{defn}
\label{def:feedforward}
We say that a configuration $\alpha$ of a TBN is \emph{feed-forward} if there is an ordering of its polymers such that for each domain type, all polymers with an excess of unstarred instances of that domain type occur before all polymers with an excess of starred instances of that domain type.

We say that a TBN $T$ is \emph{feed-forward} if there is an ordering of its monomer types with this same property---that is, $T$ is feed-forward if $\melt(T)$ is feed-forward.
\end{defn}

For example, the TBN $\{(ab), (a^* c), (b^* c^*)\}$ is feed-forward with this ordering of monomers because the $a, b$ and $c$ come strictly before the $a^*, b^*$ and $c^*$ respectively. Note that not all configurations of a feed-forward TBN are necessarily feed-forward; for instance, merging the first and third monomers in this TBN gives a non-feed-forward configuration.

An equivalent characterization can be obtained by defining a directed graph on the polymers of a configuration $\alpha$ and drawing an edge between any two polymers that can bind to each other, from the polymer with an excess unstarred binding site to the polymer with a matching excess starred binding site (or both directions if both are possible). The configuration $\alpha$ is feed-forward if and only if this graph is acyclic, and the ordering of polymers can be obtained by taking a topological ordering of its vertices.

The main benefit of considering feed-forward TBNs is that we can establish a strong lower bound on the merginess of stable configurations. If \emph{any} TBN $T$ has $n$ monomers that have starred sites, it will always take at least $\frac n 2$ merges to cover all those sites, because each monomer must be involved in at least one merge and any merge can at most bring a pair of them together. For instance, the non-feed-forward TBN $\{\{a, b^*\}, \{a^*, b\}\}$ can be stabilized with a single merge. In feed-forward TBNs, this bound is even stronger, as there is no way to ``make progress'' on covering the starred sites of two different monomers at the same time.

\todo{come up with an elegant/combinatorial way to explain why all the merges in the path of merges for the unique configuration of $T$ are necessary. Is there a condition under which we can say that all the merges in the path must be mandatory, therefore the fact that this is the unique stable configuration follows immediately?}

\begin{lem}
\label{lem:one-merge-per-monomer}
If a configuration $\alpha$ is feed-forward, then any saturated configuration $\sigma$ such that $\alpha \mergesto \sigma$ satisfies $\merginess(\sigma) - \merginess(\alpha) \geq \starriness(\alpha)$. That is, reaching $\sigma$ from $\alpha$ requires at least $\starriness(\alpha)$ additional merges.
\end{lem}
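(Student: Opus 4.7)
The plan is to recast the inequality as a statement about partitions of $\alpha$'s polymers and then reduce it to a single combinatorial claim about self-saturated subsets, which I would attack via an infinite-descent argument using the feed-forward ordering.

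First, I would observe that because $\alpha \mergesto \sigma$ uses only merges, each polymer of $\sigma$ is the union of some subset of $\alpha$'s polymers. Thus $\sigma$ determines a partition of $\alpha$'s polymers into blocks $B_1, \dots, B_r$, and $\merginess(\sigma) - \merginess(\alpha) = \sum_i (|B_i| - 1)$. Since $\sigma$ is saturated, each $B_i$ (viewed as the multiset union of its polymers) is self-saturated. Writing $s_i$ for the number of starry polymers of $\alpha$ contained in $B_i$, we have $\starriness(\alpha) = \sum_i s_i$, so it suffices to prove $|B_i| - 1 \ge s_i$ for each $i$. Equivalently, I want to show the key claim: \emph{any self-saturated union $B$ of polymers drawn from a feed-forward configuration $\alpha$ contains at least one non-starry polymer}. (For $|B|=1$ this holds trivially since a lone self-saturated polymer is non-starry; for $|B|\ge 2$ it gives $|B|-1 \ge s_B$.)

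For the key claim I would argue by contradiction, producing a strictly decreasing chain of polymers of $B$ that cannot terminate. Assume every polymer in $B$ is starry. Let $P_0$ be the last polymer of $B$ in the topological order inherited from $\alpha$; since $P_0$ is starry, pick a type $a_0$ for which $P_0$ has strictly more $a_0^*$ than $a_0$. Self-saturation of $B$ implies that the total count of $a_0$ across $B$ is at least that of $a_0^*$, so some $P_1 \in B \setminus \{P_0\}$ has strictly more $a_0$ than $a_0^*$; by the feed-forward property, $P_1$ must come strictly before $P_0$ in the ordering. Now $P_1$ is also starry, so it has strict excess of some $a_1^*$ with $a_1 \ne a_0$, and applying the same argument inside $B \setminus \{P_1\}$ yields $P_2$ strictly before $P_1$ with strict excess of $a_1$. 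Iterating produces a chain $P_0 > P_1 > P_2 > \cdots$ whose members are pairwise distinct by the strict inequalities in the topological order; this contradicts finiteness of $B$.

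The main obstacle is the bookkeeping inside the descent: at each step I must certify both that the newly produced polymer lies strictly earlier in the topological order than all previous members of the chain and that it is distinct from them. Both facts fall out of the same observation---the new polymer has strict unstarred excess of the type for which the previous polymer had strict starred excess, and feed-forward turns this sign discrepancy into strict precedence in the order---but it is crucial to choose the excess type fresh at each step from the \emph{current} starry polymer rather than reusing $a_0$. Once the key claim is in hand, summing $|B_i| - 1 \ge s_i$ over all blocks gives $\merginess(\sigma) - \merginess(\alpha) = \sum_i (|B_i|-1) \ge \sum_i s_i = \starriness(\alpha)$, completing the proof.
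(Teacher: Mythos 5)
Your proof is correct, and it rests on the same underlying mechanism as the paper's: both decompose the polymers of $\sigma$ into blocks of polymers of $\alpha$, and both exploit the fact that in a feed-forward ordering a later polymer cannot supply the unstarred excess needed to cover an earlier polymer's starred excess. The difference is in how the counting is organized. The paper runs a dynamic argument: it orders the merges within each block along the feed-forward order and shows that each individual merge decreases starriness by at most one, so at least $\starriness(\alpha)$ merges are needed to reach starriness zero. You replace this with a static count via the sublemma that every self-saturated block $B$ contains at least one non-starry polymer of $\alpha$, giving $s_i \le |B_i|-1$ and hence the bound after summing over blocks; this needs no choice of merge order at all. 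Your descent argument for the sublemma is sound --- each $P_{i+1}$ sits strictly earlier in the witnessing order than $P_i$ because it has strict unstarred excess of a type for which $P_i$ has strict starred excess, so the chain cannot repeat and must contradict finiteness --- though it can be compressed to a single step: the \emph{first} polymer of $B$ in the feed-forward order is already forced to be non-starry, since any unstarred excess covering one of its starred excesses would have to come from a strictly earlier polymer of $B$, which cannot exist. The static version buys a slightly cleaner and choice-free statement; the paper's merge-by-merge version has the minor advantage of directly exhibiting a witnessing merge sequence, which matches how the lemma is later invoked on intermediate configurations.
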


Intuitively, in a feed-forward configuration, the best we can possibly do is to cover all of the starred sites on one polymer at a time. We can never do better than this with a merge like merging $\{a, b^*\}$ and $\{a^*, b\}$ that would let two polymers cover all of each others' starred sites.

\begin{proof}
Given a feed-forward configuration $\alpha$, let $L$ be the ordered list of polymers from $\alpha$ being feed-forward. Partition $L$ into separate lists (keeping the ordering from $L$) based on which polymers are merged together in $\sigma$. That is, for each fully merged polymer $\mathbf{P} \in \sigma$ create a list $L_\mathbf{P}$ of the polymers from $\alpha$ that are merged to form $\mathbf{P}$, and order this list based on the ordering from $L$. We can order the merges to reach $\sigma$ from $\alpha$ as follows: repeatedly (arbitrarily) pick a polymer $\mathbf{P}$ from $\sigma$ and merge all of the polymers in $L_{\mathbf{P}}$ together in order (merge the first two polymers in $L_{\mathbf{P}}$, then merge the third with the resulting polymer, and so on).\\\\
This sequence of merges gives us a sequence of configurations $\alpha = \alpha_1, \alpha_2, \ldots, \alpha_\ell = \sigma$. We observe that for $1 \leq i \leq \ell - 1$, we have $\starriness(\alpha_i) - \starriness(\alpha_{i+1}) \leq 1$. That is, each merge can lower the starriness by at most one. We know this because each merge is merging a polymer $\mathbf{Q} \in \alpha$ with one or more other already-merged polymers from $\alpha$ that all come before $\mathbf{Q}$ in $L$. This means $\mathbf{Q}$ cannot cover any starred sites on any monomers it is merging with. The only way for the starriness of a configuration to decrease by more than 1 in a single merge is for the two merging polymers to cover all of each others' starred sites, so it follows that each merge in this sequence lowers starriness by at most 1. From this it follows that we need at least $\starriness(\alpha)$ merges to get to $\sigma$, because $\starriness(\sigma) = 0$.

\end{proof}

Letting $\alpha = \melt(T)$ (note $m(\alpha)=0$) gives the following corollary.

\begin{cor}
\label{cor:one-merge-per-monomer}
Any saturated configuration $\sigma$ of a feed-forward TBN $T$ satisfies $\merginess(\sigma) \geq \starriness(\melt(T))$.
\end{cor}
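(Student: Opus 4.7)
The plan is to invoke \cref{lem:one-merge-per-monomer} directly with the choice $\alpha = \melt(T)$, so the corollary reduces to verifying the three hypotheses of that lemma and simplifying the resulting inequality.

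First I would observe that $\melt(T)$ is feed-forward precisely by the definition of a feed-forward TBN (\cref{def:feedforward} defines $T$ feed-forward as $\melt(T)$ feed-forward), so the hypothesis of the lemma is satisfied. Next, for any saturated configuration $\sigma$ of $T$, we have $\melt(T) \mergesto \sigma$ trivially: starting from the all-singleton configuration we can reach any configuration of $T$ by performing a sequence of merges, one for each pair of monomers that end up in a common polymer of $\sigma$. So the second hypothesis holds as well.

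Applying the lemma then yields $\merginess(\sigma) - \merginess(\melt(T)) \geq \starriness(\melt(T))$. Since $\melt(T)$ has every monomer as its own polymer, the number of polymers in $\melt(T)$ equals the number of monomers in $T$, and hence $\merginess(\melt(T)) = 0$ by \cref{def:feedforward}'s companion definition of merginess. Substituting gives the desired inequality $\merginess(\sigma) \geq \starriness(\melt(T))$.

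The step most worth double-checking is the second: that every saturated $\sigma$ can genuinely be obtained from $\melt(T)$ by merges alone (as opposed to needing splits). This is immediate because $\melt(T)$ is the maximally split configuration, so any configuration $\sigma$ satisfies $\melt(T) \splitsto \sigma^{\text{rev}}$ in reverse, i.e., $\melt(T) \mergesto \sigma$; there is no real obstacle here, just a sanity check that the relation $\mergesto$ in the lemma's hypothesis is vacuous when the starting configuration is fully melted. With that in hand, the corollary follows as a one-line instantiation.
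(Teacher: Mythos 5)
Your proof is correct and matches the paper's: the corollary is obtained exactly as you describe, by instantiating \cref{lem:one-merge-per-monomer} with $\alpha = \melt(T)$ and noting $\merginess(\melt(T)) = 0$. The hypothesis checks you perform (that $\melt(T)$ is feed-forward by definition and that $\melt(T) \mergesto \sigma$ holds vacuously) are exactly the observations the paper leaves implicit.
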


Because stable configurations are saturated configurations with the minimum possible merginess, this bound gives the following corollary.

\begin{cor}
\label{cor:stable-if-requires-exactly-s-merges}
If a saturated configuration $\sigma$ of a feed-forward TBN $T$ satisfies $\merginess(\sigma) = \starriness(\melt(T))$,
then $\sigma$ is stable.
\end{cor}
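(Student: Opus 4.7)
The plan is to obtain this as an immediate consequence of \cref{cor:one-merge-per-monomer} combined with the definition of stability in terms of merginess. By that corollary, every saturated configuration of a feed-forward TBN $T$ has merginess at least $\starriness(\melt(T))$. So if the saturated configuration $\sigma$ achieves $\merginess(\sigma) = \starriness(\melt(T))$, then $\sigma$ attains the minimum possible merginess among all saturated configurations of $T$.

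Now I would invoke the definition of stability. Since the total number of monomers in $T$ is a fixed constant, and merginess equals this constant minus the number of polymers, minimizing merginess over saturated configurations is the same as maximizing the number of polymers over saturated configurations. That is precisely the defining condition of a stable configuration, so $\sigma$ is stable.

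There is no real obstacle here: the content of the lemma has already been absorbed into \cref{cor:one-merge-per-monomer}, and this corollary is simply the observation that any configuration meeting the feed-forward lower bound with equality must be optimal, hence stable. The one-line proof will just cite \cref{cor:one-merge-per-monomer} and note that $\merginess$ is minimized iff the polymer count is maximized.
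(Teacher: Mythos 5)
Your proposal is correct and matches the paper's own reasoning exactly: the paper derives this corollary directly from \cref{cor:one-merge-per-monomer} together with the observation that stable configurations are precisely the saturated configurations of minimum merginess (equivalently, maximum polymer count). Nothing is missing.
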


\section{Signal Amplification TBN}
\subsection{Amplification Process}
\label{subsec:amplification}
In this section, we prove our main theorem. This theorem shows the existence of a TBN parameterized by two values $n$ (the amplification factor) and $k$ (the entropy gap). Intuitively, this TBN amplifies the signal of a single monomer by a factor of $2^n$, with any configurations that give ``incorrect'' readings having $\Omega(k)$ distance to stability. Our proof will be constructive.
\begin{thm}
\label{thm:mainthm}For any integers $n \geq 1, k \geq 2$, there exists a TBN $T = T_{n,k}$ and monomer $\mathbf{a}$ (the analyte) such that if $T^{\mathrm{a}} = T_{n,k}^\mathrm{a}$ is the TBN obtained by adding one copy of $\mathbf{a}$ to $T_{n,k}$, then
\begin{enumerate}
    \item $T$ and $T^{\mathrm{a}}$ each have exactly one stable configuration, denoted $\sigma_{n,k}$ and $\sigma^{\mathrm{a}}_{n,k}$ respectively, with $d(\sigma_{n,k}, \sigma^{\mathrm{a}}_{n,k}) \geq 2^n$. In particular, there are $k$ monomer types with $2^{n-1}$ copies each, with all of these monomers bound in $\sigma_{n,k}$ and unbound in $\sigma_{n,k}^{\mathrm{a}}$.
    \item $T$ and $T^{\mathrm{a}}$ each have an entropy gap of $\lfloor \frac k 2 \rfloor - 1$.
    \item $T$ and $T^{\mathrm{a}}$ each use $\mathcal O(nk)$ total monomer types, $\mathcal O(nk^2)$ domain types, and $\mathcal O(k^2)$ domains per monomer.
\end{enumerate} \end{thm}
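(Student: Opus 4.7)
The plan is to construct $T_{n,k}$ as a feed-forward cascade organized into $n$ doubling levels plus a reporter layer at the top, with $k$-fold parallel redundancy threaded through each level in order to achieve the claimed entropy gap. Concretely, I envision level $i$ (for $1 \le i \le n$) containing roughly $2^{i-1}$ copies of a small group of $O(k)$ ``doubler'' monomer types, and the reporter layer containing $k$ distinct reporter monomer types in $2^{n-1}$ copies each. Without the analyte, the unique stable configuration $\sigma_{n,k}$ will have each reporter sequestered into a large polymer together with its parent doubler chain running down to the bottom of the cascade; adding the single analyte $\mathbf{a}$ introduces one ``excess'' unstarred interface at the bottom that propagates upward through the cascade, producing the alternative stable configuration $\sigma^{\mathbf{a}}_{n,k}$ in which every reporter is a singleton polymer. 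A natural topological ordering of monomers (analyte at one end, reporters at the other) should verify that $T$ and $T^{\mathbf{a}}$ are feed-forward, letting us invoke \cref{cor:stable-if-requires-exactly-s-merges} to certify stability.

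With the construction fixed, I would define $\sigma_{n,k}$ and $\sigma^{\mathbf{a}}_{n,k}$ explicitly and verify that each has merginess exactly equal to $\starriness(\melt(T))$ (respectively $\starriness(\melt(T^{\mathbf{a}}))$); \cref{cor:stable-if-requires-exactly-s-merges} then gives stability immediately. The distance lower bound follows by inspection: the $k \cdot 2^{n-1} \ge 2^n$ reporter monomers sit inside larger polymers in $\sigma_{n,k}$ and as singletons in $\sigma^{\mathbf{a}}_{n,k}$, contributing at least $2^n$ to the $L^1$ distance between polymer-count vectors. Uniqueness of the stable configuration will then be a consequence of the entropy gap being at least $1$: by \cref{def:entropygap}, any saturated $\alpha$ that does not split to the canonical stable configuration has merginess strictly larger than it, and hence cannot itself be stable.

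The main obstacle is establishing the entropy gap of $\lfloor k/2 \rfloor - 1$. My plan is an inductive argument over cascade levels: assuming a saturated $\alpha$ agrees with the canonical stable configuration above some level $i$, I would argue that any deviation at level $i$ that cannot be repaired by splits alone propagates through all $k$ parallel ``tracks,'' forcing $\alpha$ to pay roughly $\lfloor k/2 \rfloor$ extra merges beyond the feed-forward lower bound of \cref{cor:one-merge-per-monomer}. The combinatorial challenge is handling all possible local deviations uniformly --- swaps between which reporter is bound by which doubler, partial releases of some but not all tracks at a level, and hybrid mixtures of the $T$ and $T^{\mathbf{a}}$ binding patterns --- and showing that each incurs a cost linear in $k$. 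Finally, the resource counts should fall out of a direct audit of the construction: each of the $n$ levels contributes $O(k)$ monomer types and $O(k^2)$ fresh domain types, each doubler has $O(k)$ interfaces carrying $O(k)$ domains each for $O(k^2)$ domains per monomer, giving totals of $O(nk)$ monomer types and $O(nk^2)$ domain types as required.
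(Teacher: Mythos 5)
There is a genuine gap, and it is the central difficulty of the whole construction: your design has no mechanism that forces the analyte's signal to propagate. In a pure doubling cascade, each "displacement" step (a group of $\mathbf{s}$-type monomers releasing the $\mathbf{u}$-type reporters of the next level and binding in their place) costs an equal number of splits and merges, so the fully-propagated configuration and the configuration in which the analyte sits as a free singleton and \emph{nothing else changes} have the same polymer count. Both satisfy the one-merge-per-starred-monomer criterion of \cref{cor:stable-if-requires-exactly-s-merges}, so both are stable, and $T^{\mathrm{a}}$ fails to have a unique stable configuration --- indeed its "do nothing" stable configuration has all reporters still bound, killing the distance claim. The paper resolves this with two ingredients absent from your proposal: (i) a second, \emph{converging} half of the cascade (monomers $\mathbf{s}'_{i,j}, \mathbf{u}'_{i,j}$) that funnels the exponentially many activated monomers back down to a single set of $k$ polymers of bounded size, which is necessary because the $\mathcal O(k^2)$-domains-per-monomer budget forbids detecting the exponentially many activated monomers with one large molecule; and (ii) an entropic \emph{payoff} gadget ($\mathbf{p}^*$ together with $k/2$ monomers $\mathbf{p}_j$) that can be covered in one merge only if the signal has fully converged, and otherwise costs $k/2$ merges. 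This payoff is what makes the fully-propagated configuration strictly better (by $k/2-1$ polymers) than every partially-propagated one, yielding uniqueness of $\sigma^{\mathrm{a}}_{n,k}$ and the post-analyte entropy gap.

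Relatedly, your inductive entropy-gap argument attributes a cost of "roughly $\lfloor k/2\rfloor$" to each local deviation, but in the actual construction the per-level deviation cost is $k-1$ (covering an $\mathbf{s}_{i,j}$ without its $\mathbf{u}_{i,j}$ takes $k$ merges instead of $1$, after which \cref{lem:one-merge-per-monomer} applies to the still-feed-forward remainder), and the $k/2$ in the theorem comes specifically from tuning the payoff gadget to split the budget between false positives and false negatives. The remaining elements of your plan --- feed-forward certification of stability via \cref{cor:stable-if-requires-exactly-s-merges}, the $L^1$ distance bound from the $k\cdot 2^{n-1}$ freed reporters, uniqueness via the entropy gap, and the resource audit --- do match the paper's approach, but they cannot be carried out until the convergence and payoff machinery is in place.
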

The first condition implies that $T_{n,k}$ can detect a single copy of $\mathbf{a}$ with programmable exponential strength - there is only one stable configuration either with or without $\mathbf{a}$, and they can be distinguished by an exponential number of distinct polymers. Note that this is even stronger than saying that $d(T_{n,k}, T_{n,k}^\mathrm{a}) \geq 2^n$, as that statement would allow each TBN to have multiple stable configurations. The second condition implies that the system has a programmable resilience to having incorrect output, because configurations other than the unique stable ones in each case are ``programmably'' unstable (based on $k$), and thus programmably unlikely to be observed. Note that throughout this paper we will  use $\frac k 2$ instead of $\lfloor \frac k 2 \rfloor$ for simplicity, as we are concerned mainly with asymptotic behavior. The third condition establishes that the system doesn't ``cheat'' - it doesn't obtain this amplification by either having an extremely large number of distinct monomers, or by having any single large monomers.

\begin{figure}
        \centering 
        \includegraphics[width=0.95\columnwidth]{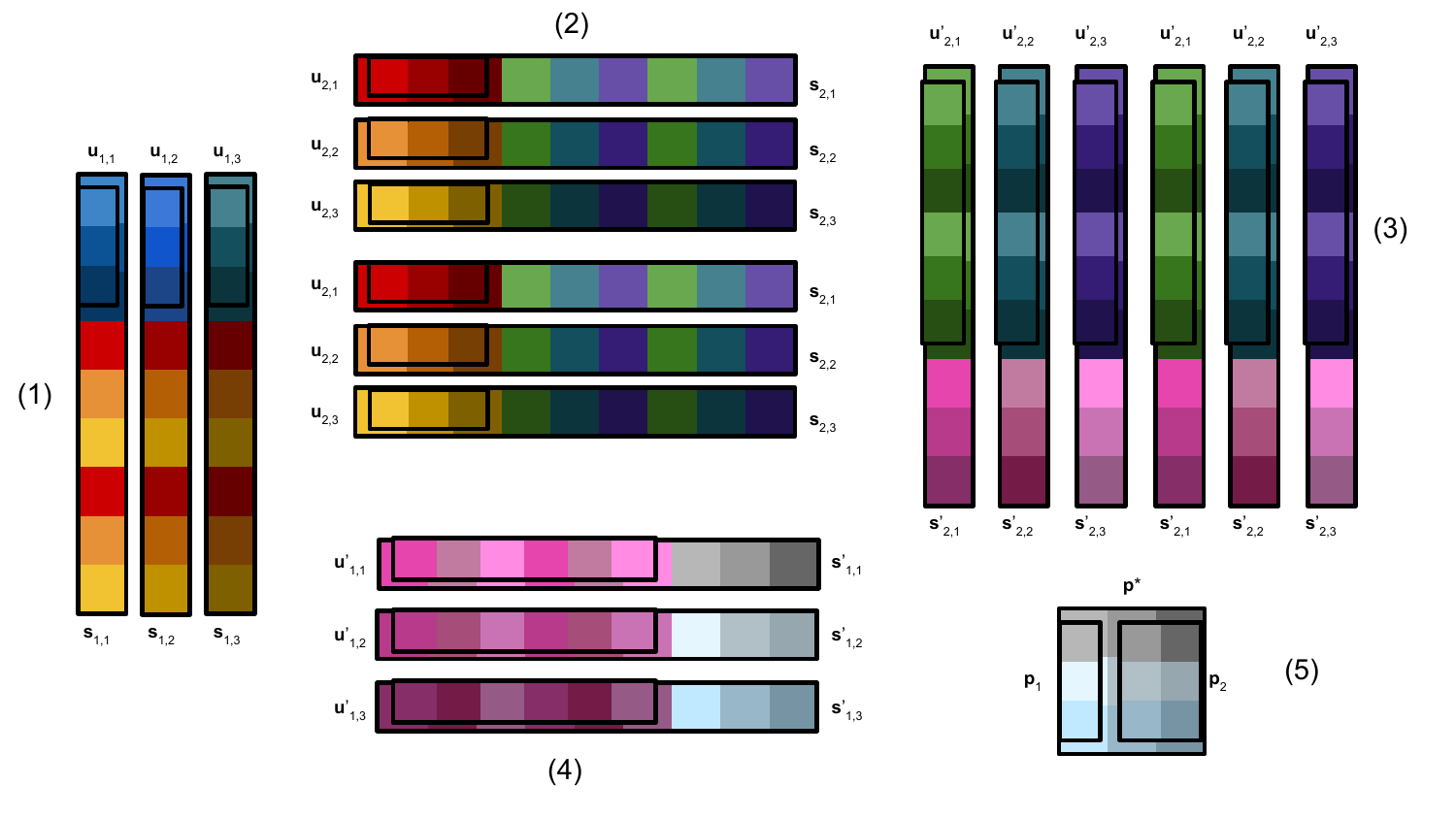}
        \caption{The unique stable configuration $\sigma_{2,3}$ of $T_{2,3}$, with 19 polymers. All starred sites are visually ``covered'' by unstarred sites on another monomer. The parts of the diagram are numbered by the order that the signal from the analyte will cascade through them. Parts (1) and (2) form the ``first half'', where the signal is doubled at each step. Parts (3) and (4) form the ``second half'', where the signal converges so that it can get an ``entropic payoff'' from part (5).}
        \label{fig:pre_analyte}
        
        \centering
        \includegraphics[width=0.95\columnwidth]{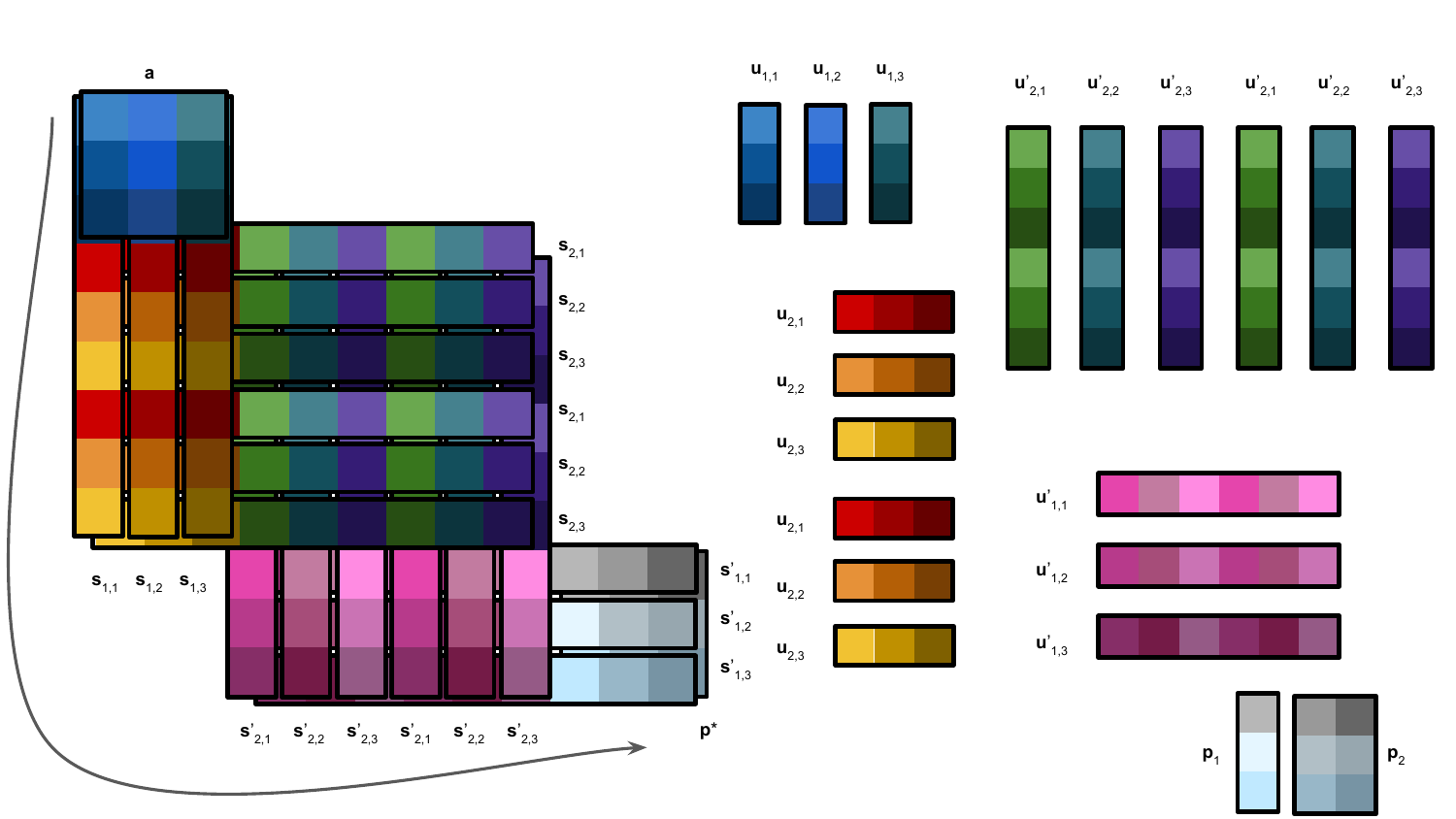}
        \caption{The unique stable configuration $\sigma^{\mathrm{a}}_{2,3}$ of $T^{\mathrm{a}}_{2,3}$. The arrow shows the conceptual order in which the analyte's signal has been propagated, with $\mathbf{a}$ covering all $\mathbf{s}_{1,j}$, which cover all $\mathbf{s}_{2,j}$, which cover all $\mathbf{s}'_{2,j}$, which cover all $\mathbf{s}'_{1,j}$, which finally cover $\mathbf{p}^*$. This configuration has 21 polymers, 2 more than $\sigma_{2,3}$: conceptually, one of these is from adding the analyte and the other is from the analyte's signal cascading through the layers to release $P_1$ and $P_2$ at the cost of one merge. As they have no polymers in common, $d(\sigma_{2,3}, \sigma^{\mathrm{a}}_{2,3}) = 19 + 21 = 40$.}
        \label{fig:with_analyte}
\end{figure}

\begin{toappendix}
\begin{figure}
    \centering
    \includegraphics[width=0.95\columnwidth]{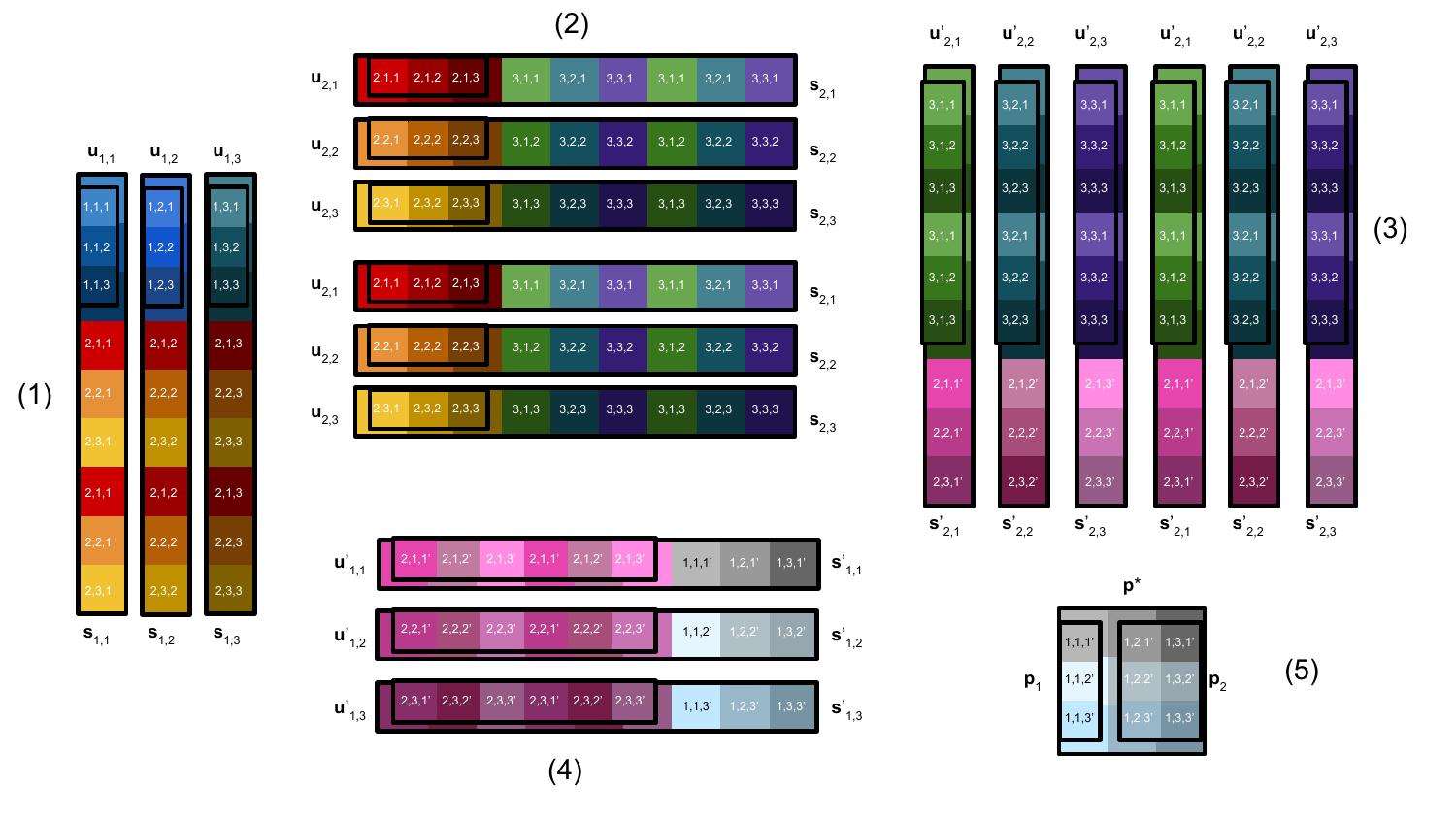}
    \caption{A version of \cref{fig:pre_analyte} with text labels on domains, for accessibility and allowing comparison with the domains as defined in the text of the paper. This figure shows the unique stable configuration of $T_{2,3}$.}
    \label{fig:text_labels}

    \centering
    \includegraphics[width=0.95\columnwidth]{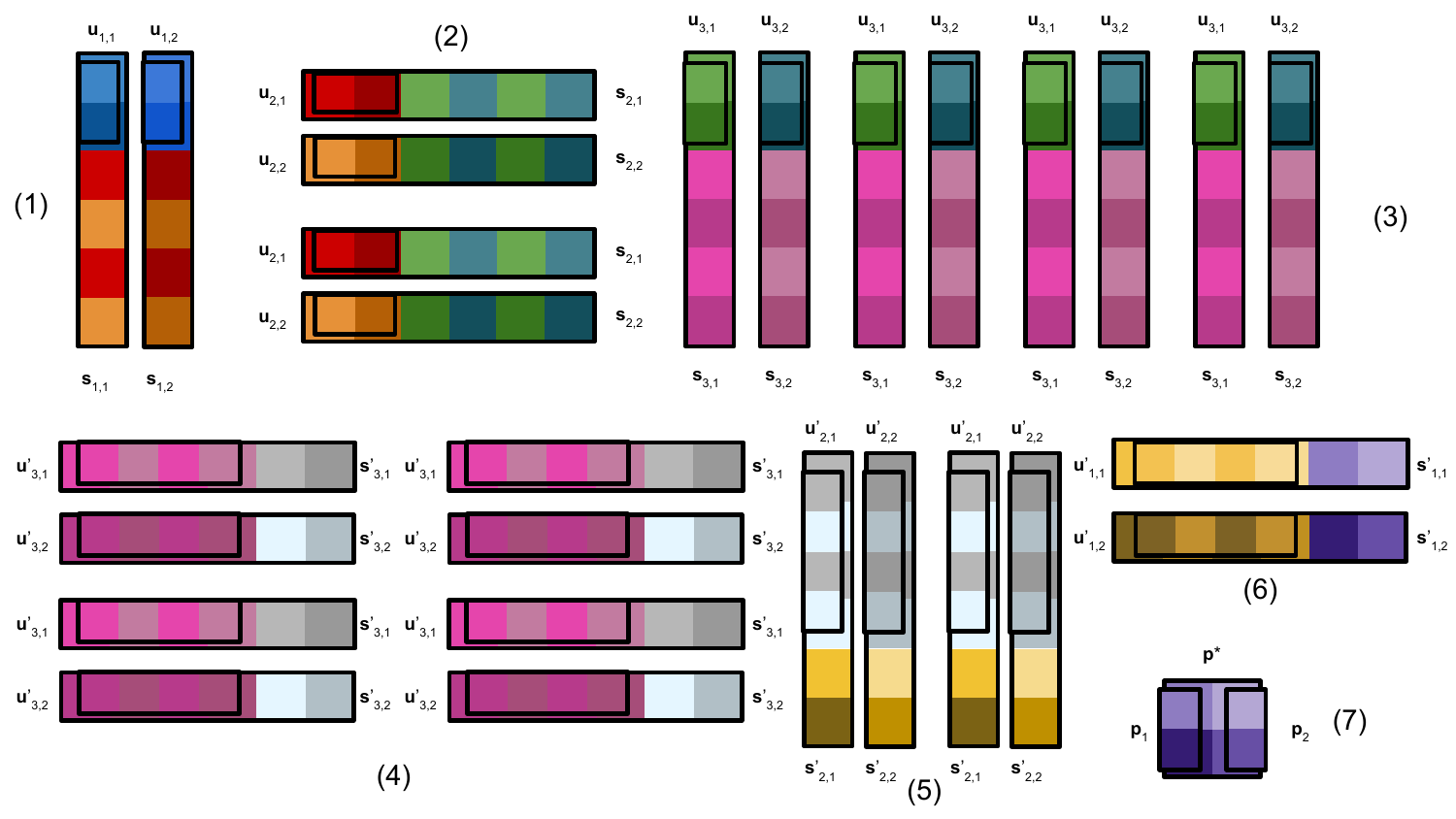}
    \caption{The unique stable configuration $\sigma_{3,2}$ of $T_{3,2}$. Compared to \cref{fig:pre_analyte} (or \cref{fig:text_labels} above), which show $T_{2,3}$, this figure shows one more layer and a smaller entropy gap parameter. The additional layer means that there are 4 copies of each monomer in the largest parts of the figure, compared to 2 copies of each monomer in the other figures; if another layer were added, it would contain 8 copies of each monomer. The smaller entropy gap parameter manifests in this figure visually having a ``2 by 2 grid'' design motif, compared to the ``3 by 3 grid'' motif in the other figures.}
    \label{fig:more_layers}
    
\end{figure}
\end{toappendix}

The entire TBN $T_{n,k}$ is depicted in Figures \ref{fig:pre_analyte} and \ref{fig:with_analyte} with $n = 2$ and $k = 3$. The former shows the unique stable configuration before adding the analyte, and the latter shows the unique stable configuration after adding the analyte. For comparison, \cref{fig:more_layers} depicting the pre-analyte configuration with $n = 3$ and $k = 2$ is shown in the appendix.

We will start by constructing the first half of $T_{n,k}$ and describing ``how it works''. The monomers in this first half are the driving force that allows $T_{n,k}$ and $T_{n,k}^\mathrm{a}$ to have exponentially different stable configurations.

The first half of $T_{n,k}$ has monomer types $\mathbf{u}_{i,j}$ and $\mathbf{s}_{i,j}$ (named as those with only unstarred sites, and those with both starred and unstarred sites) for $1\leq i \leq n, 1 \leq j \leq k$. It has domain types denoted as triples $(i,j,\ell)$ for $1 \leq i \leq n + 1, 1 \leq j, \ell \leq k$. Each $\mathbf{u}_{i,j}$ monomer has $k$ different unstarred domains, one of each $(i,j,\ell)$ for each $1 \leq \ell \leq k$. Each $\mathbf{s}_{i,j}$ monomer has a starred copy of each domain in $\mathbf{u}_{i,j}$, and additionally has two copies of each unstarred domain $(i+1,\ell,j)$ for each $1 \leq \ell \leq k$ (note that here the second domain type parameter varies instead of the third). For each $\mathbf{u}_{i,j}$ and $\mathbf{s}_{i,j}$ monomer, there are $2^{i-1}$ copies. We can conceptually break these monomers into $n$ ``layers'', each consisting of all monomers with the same value for their first parameter. The analyte we wish to detect, $\mathbf{a}$, is a monomer that has one copy of each unstarred domain $(1,j,\ell)$, $1 \leq j, \ell \leq k$. 

Conceptually, when the analyte is absent, the most efficient way for all starred sites on each $\mathbf{s}_{i,j}$ to be covered is by the unstarred sites on a corresponding $\mathbf{u}_{i,j}$, as seen in \cref{fig:pre_analyte}. Although the TBN model is purely thermodynamic, we can conceptualize that when the analyte is added, its signal can propagate ``kinetically'' through each layer. In the first layer, it can ``displace'' the $k$ different $\mathbf{u}_{1,j}$ monomers and bind to all of the $\mathbf{s}_{1,j}$ monomers. In doing so, it brings together all the unstarred sites on all of the $\mathbf{s}_{1,j}$ monomers. Having been brought together, these sites ``look like'' two copies of the analyte, but with the domains from layer 2 instead of layer 1. Thus, this polymer is then able to displace \emph{two} copies of each $\mathbf{u}_{2,j}$ from their corresponding $\mathbf{s}_{2,j}$ monomers, thus bringing all of the $\mathbf{s}_{2,j}$ together. This in turn now looks like four copies of the analyte for the domains in the third layer, and so on. Each layer allows this polymer to assimilate exponentially more $\mathbf{s}_{i,j}$, thus freeing exponentially many $\mathbf{u}_{i,j}$. Each of these displacement steps involves an equal number of splits and merges.

\subsection{Convergence Process}
\label{subsec:convergence}
So far, the TBN described has exactly one stable configuration before adding the analyte, and it performs the task of amplifying signal by having the potential to change its state exponentially when the analyte is added. However, there is also a stable configuration after adding the analyte in which nothing else changes, and many others in which only a small amount of change occurs. We must guaranteed that the analyte's signal ``propagates'' through all of the layers.

To design the system to meet this requirement, we observe that all exponentially many monomers that have been brought together must contribute to some singular change in the system that gains some entropy, to spur the signal into propagating. The typical way to accomplish this in a TBN is by having monomers that have been brought together displace a larger number of monomers from some complex at the cost of a smaller number of merges. Because the pre-analyte TBN has an entropy gap of $k - 1$ in this design so far, we can afford to give the TBN with the analyte an ``entropic payoff'' of $\frac k 2$. When the analyte is absent, this payoff is weak enough that there will still be an entropy gap of $\frac k 2 - 1$; when the analyte is present, the existence of this payoff will force the signal to fully propagate, and will give the TBN with the analyte an entropy gap of $\frac k 2  - 1$ by making it so that any configurations in which this payoff is not achieved are also far away from stable. 

Another challenge is that we cannot simply detect all our exponentially many conjoined monomers by binding them all to a single exponentially large monomer, because we need to bound the size of the largest monomer in the system. Our conceptual strategy for overcoming this is as follows: the signal will converge in much the same way as it was amplified. In the amplification step, one set of domains coming together in one layer was enough to cause two of them to come together in the next layer. In this convergence step, two sets of domains in one layer will have to converge together to activate one set in the next layer. This convergence ends in bringing together a set of binding sites that is of the same size as the analyte, which can then directly displace some monomers to gain $\frac k 2$ total polymers.

We now fully define $T_{n,k}$. We start with the already described $\mathbf{u}_{i,j}$ and $\mathbf{s}_{i,j}$. To these, we first add monomer types $\mathbf{u}'_{i,j}$ and $\mathbf{s}'_{i,j}$ for $1 \leq i \leq n, 1 \leq j \leq k$. These monomers are the `converging' equivalents of $\mathbf{u}_{i,j}$ and $\mathbf{s}_{i,j}$. Conceptually, they will activate in the reverse order: two copies of each $\mathbf{s}'_{i,j}$ for $1 \leq j \leq k$, when brought together, will be able to bring together one copy of each $\mathbf{s}'_{i-1,j}$ for $1 \leq j \leq k$.

Each $\mathbf{u}'_{i,j}$ monomer has $2k$ unstarred domains: two copies each of domains $(i+1,j,\ell)'$ for each $1 \leq \ell \leq k$. Each $\mathbf{s}'_{i,j}$ has a starred copy of each of the $2k$ domains in $\mathbf{u}'_{i,j}$ and additionally has one unstarred domain $(i,\ell,j)'$ for $1 \leq \ell \leq k$ (note again here that the second domain type parameter varies instead of the third). One exception is the monomers $\mathbf{u}'_{n,j}$ and $\mathbf{s}'_{n,j}$ (the first ones to activate) which use domains $(n+1,j,\ell)$ and $(n+1,\ell,j)$ instead of $(n+1,j,\ell)'$ and $(n+1,\ell,j)'$ respectively so that they can interact with $\mathbf{s}_{n,j}$ monomers that have been brought together. Each monomer $\mathbf{u}'_{i,j}$ and $\mathbf{s}'_{i,j}$ has $2^{i-1}$ copies. 

Finally, we add ``payoff'' monomers that will yield an entropic gain of $\frac k 2$ when the signal from the analyte has cascaded through every layer. This choice of $\frac k 2$ is arbitrary---a similar design works for any integer between 1 and $k$. Choosing a higher value leads to a higher entropy gap after adding the analyte and a lower entropy gap before adding it, and vice versa choosing a lower value. For simplicity of definitions we will assume $k$ is even (though figures are shown with $k = 3$, which shows how to generalize to odd $k$). 

We add one monomer $\mathbf{p}^*$, which contains the $k^2$ sites $(1,\ell_1,\ell_2)'^*$ for $1 \leq \ell_1, \ell_2 \leq k$. Note that this monomer can be replaced with $k$ monomers of size $k$ (in which case $\mathbf{a}$ would be the only monomer with more than $3k$ domains), but doing so makes the proof more complex. The idea is that when all $\mathbf{s}'_{1,j}$ monomers are already together (as they can be ``for free'' when $\mathbf{a}$ is present), they can cover $\mathbf{p}^*$ in one merge; if they are apart, this requires $k$ merges. In order to make this favorable to happen when they're already together but unfavorable when they're initially apart, we add another way to cover $\mathbf{p}^*$ that takes $\frac {k}{2}$ merges. This is accomplished via monomers $\mathbf{p}_j$ for $1 \leq j \leq \frac k 2$. Each $\mathbf{p}_j$ contains the $2k$ sites $(1,2j - 1,\ell)$ and $(1,2j,\ell)$ for $1 \leq \ell \leq k$. We can interpret this geometrically as $\mathbf{p}^*$  being a square, the $\mathbf{s}'_{1,j}$ covering it by rows, and the $\mathbf{p}_j$ covering it by two columns at a time. This completes the definition of $T_{n,k}$. Recall $T_{n,k}^\mathrm{a}$ is $T_{n,k}$ with one added copy of $\mathbf{a}$.

\begin{lemrep}
    \label{lem:pre-analyte-one-stable-config}
    $T_{n,k}$ has exactly one stable configuration $\sigma_{n,k}$.
\end{lemrep}

\begin{proof}
    We consider merges to get from the melted configuration to any saturated configuration. We may order these merges such that we first make all the merges necessary to cover each individual $\mathbf{s}_{i,j}$ in increasing value of $i$, then each individual $\mathbf{s}'_{i,j}$ in decreasing value of $i$. We see that at each step of this process, we may cover the monomer in question by a single merge (of its corresponding $\mathbf{u}_{i,j}$ or $\mathbf{u}'_{i,j}$). If we never merge the corresponding $\mathbf{u}$ monomer, the only other monomers that can cover the starred sites on a given $\mathbf{s}_{i,j}$ are $k$ different $\mathbf{s}_{i-1,\ell}$ monomers. Likewise, the only other way to cover the starred sites on a given $\mathbf{s}'_{i,j}$ is by using $k$ different $\mathbf{s}'_{i+1,\ell}$ monomers (except for $\mathbf{s}'_{n,j}$ which needs $\mathbf{s}_{n,\ell}$ monomers). 

    If an $\mathbf{s}$ or $\mathbf{s}'$ monomer is covered in multiple different ways, we order the merges such that it is first covered by one corresponding $\mathbf{u}$ monomer (and then ignore any other merges for now, as we are still ordering the merges to cover each $\mathbf{s}$ monomer sequentially). We see then that if every $\mathbf{s}$ monomer is covered by a $\mathbf{u}$ monomer, then no $\mathbf{s}$ monomers will be brought together during this process. Therefore, the first time in this sequence that we choose to cover an $\mathbf{s}$ without its corresponding $\mathbf{u}$ will require $k$ total merges to cover that $\mathbf{s}$. The resulting configuration is feed-forward, so by  \cref{cor:one-merge-per-monomer}, reaching a stable configuration requires at least one more merge per remaining $\mathbf{s}$ monomer. This results in at least $k - 1$ extra merges compared to covering $\mathbf{s}$ and $\mathbf{s}'$ monomers by using $\mathbf{u}$ and $\mathbf{u}'$ monomers respectively. 

    Once all $\mathbf{s}$ and $\mathbf{s}'$ monomers are covered, the only other monomer with starred sites is $\mathbf{p}^*$, so we can make all the merges that are needed to cover it. If none of the $\mathbf{s}'_{1,j}$ monomers have been brought together, then the fewest merges it takes to cover $\mathbf{p}^*$ is $\frac k 2$, via the $\mathbf{p}_j$ monomers. If any of them have been brought together, then it could potentially take a single merge to cover $\mathbf{p}^*$. However, this would have required $k - 1$ extra merges at some point during the covering of $\mathbf{s}$ monomers, resulting in $\frac k 2$ extra total merges compared to covering all $\mathbf{s}$ monomers with $\mathbf{u}$ monomers, then covering $\mathbf{p}^*$ with $\mathbf{p}_j$ monomers.

    Therefore, this latter set of merges covers all starred sites in as few merges as possible, and therefore gives the unique stable configuration of $T_{n,k}$. 
    
\end{proof}

\begin{correp}
    \label{cor:pre_analyte_entropy_gap}
    $T_{n,k}$ has an entropy gap of $\frac k 2 - 1$.
\end{correp}

\begin{proof}
    Recall \cref{def:entropygap} for what we must show. Any saturated configuration that does not make all the merges in $\sigma_{n,k}$ must either have some $\mathbf{s}$ that is not covered by its corresponding $\mathbf{u}$ (resulting in at least $\frac k 2$ extra merges, as per the above argument), or must cover $\mathbf{p}^*$ with initially-separate $\mathbf{s}'_{1,j}$ monomers (resulting in $\frac k 2$ extra merges). Thus, any such configuration has distance to stability at least $\frac k 2$. Any other saturated configuration that does make all of the merges in this sequence simply makes some extra merges afterward, and therefore splits to $\sigma_{n,k}$. It follows that $T_{n,k}$ has an entropy gap of $\frac k 2$ (and also of $\frac k 2 - 1$, for consistency in the statement of \cref{thm:mainthm}).
\end{proof}

\begin{lemrep}
    \label{lem:post_analyte_entropy_gap}
    $T_{n,k}^\mathrm{a}$ has exactly one stable configuration $\sigma^{\mathrm{a}}_{n,k}$, and $T_{n,k}^\mathrm{a}$ has an entropy gap of $\frac k 2 - 1$.
\end{lemrep}

\begin{proof}
    We see that $T_{n,k}^\mathrm{a}$ (like $T_{n,k}$) is feed-forward (recall \cref{def:feedforward}) by first ordering $\mathbf{a}$ along with all the $\mathbf{u}_{i,j}$, $\mathbf{u}'_{i,j}$, and $\mathbf{p}_j$ monomers (none of which have starred sites), then all the $\mathbf{s}_{i,j}$ in increasing order of $i$, then all the $\mathbf{s}'_{i,j}$ in decreasing order of $i$, and finally $\mathbf{p}^*$. 

    Unlike $T_{n,k}$, however, we may reach a stable state by merging $\mathbf{a}$ together with every single $\mathbf{s}_{i,j}$, every single $\mathbf{s}'_{i,j}$ and $\mathbf{p}^*$ into a single polymer. This covers all starred sites, and requires exactly one merge per monomer with starred sites, so by \cref{cor:stable-if-requires-exactly-s-merges}, this configuration $\sigma^{\mathrm{a}}_{n,k}$ is stable.

    Now, we examine an arbitrary saturated configuration $\sigma$ of $T_{n,k}^\mathrm{a}$. We consider merges in essentially the opposite order of how they were considered when analyzing $T_{n,k}$. First, consider $\mathbf{p}^*$. It must be covered either by all the $\mathbf{p}_j$ monomers, or by all the $\mathbf{s}'_{1,j}$ monomers. If we merge all the $\mathbf{p}_j$ monomers to $\mathbf{p}^*$, we arrive at a configuration that is still feed-forward, but has only one fewer polymer with uncovered starred sites compared to $\melt(T_{n,k}^\mathrm{a})$ in spite of making $\frac k 2$ merges. Therefore, by \cref{lem:one-merge-per-monomer}, reaching a saturated configuration from this point requires at least $\frac k 2 - 1$ extra merges compared to $\sigma^{\mathrm{a}}_{n,k}$.

    Now, we may make a similar argument for all $\mathbf{s}$ monomers in the opposite order that we considered them in 
    \cref{lem:pre-analyte-one-stable-config}. First, either we have already made $\frac k 2 - 1$ extra merges, or the $\mathbf{s}'_{1,j}$ monomers have all been brought together on a single polymer to cover $\mathbf{p}^*$. If we now make all the merges necessary to cover all starred sites on this polymer, we must do so either using all the $\mathbf{u}'_{1,j}$ or by using all the $\mathbf{s}'_{2,j}$. If we use the former, then this will require $k$ total merges but will only reduce the count of polymers with starred sites by 1. The resulting configuration is still feed-forward, so again by \cref{lem:one-merge-per-monomer} any saturated configuration we reach from this point will require at least $k - 1$ extra merges compared to $\sigma$. Otherwise, we must bring all the $\mathbf{s}_{2,j}$ monomers together to cover these sites. This does not fall victim to the same argument, because bringing these monomers with starred sites together onto the same polymer lowers the total number of polymers with uncovered starred sites. Now that they have been brought together, the same argument shows that we must either cover all the starred sites on the $\mathbf{s}'_{2,j}$ using all the $\mathbf{s}'_{3,j}$, or suffer $k - 1$ extra merges. The same argument for each layer in the converging part of the TBN also works for each layer in the amplifying part. Finally, after running through this argument we arrive at all $\mathbf{s}_{1,j}$ being brought together, which can be covered either by a single merge of $\mathbf{a}$ or by merging the $k$ $\mathbf{u}_{1,j}$ to it.

    Overall, this shows that any saturated configuration of $T_{n,k}^\mathrm{a}$ either makes all of the merges in $\sigma^{\mathrm{a}}_{n,k}$ or it must make at least $\frac k 2 - 1$ extra merges. It follows that $\sigma^{\mathrm{a}}_{n,k}$ is the unique stable configuration of $T_{n,k}^\mathrm{a}$, with an entropy gap of $\frac k 2 - 1$ as desired. 
\end{proof}

Proofs of these results are left to the appendix.

These results together complete the proof of \cref{thm:mainthm}: each of the more than $2^n$ $\mathbf{u}$ and $\mathbf{u}'$ monomers (which serve as reporters) are bound in $\sigma_{n,k}$ and unbound in $\sigma^{\mathrm{a}}_{n,k}$, implying their distance is more than $2^n$. The largest monomer is $\mathbf{a}$ with $k^2$ domains, and there are $(2n+1)k^2$ domain types and $4nk$ monomer types for the $\mathbf{s}_{i,j}$, $\mathbf{u}_{i,j}$, $\mathbf{s}'_{i,j}$, and $\mathbf{u}'_{i,j}$, plus $2 + \frac k 2$ more for $\mathbf{a}$, $\mathbf{p}^*$, and $\mathbf{p}_j$.

\subsection{Avoiding Large Polymer Formation}
\label{subsec:translators}

\begin{figure}
    \centering
    \includegraphics[width=0.95\columnwidth]{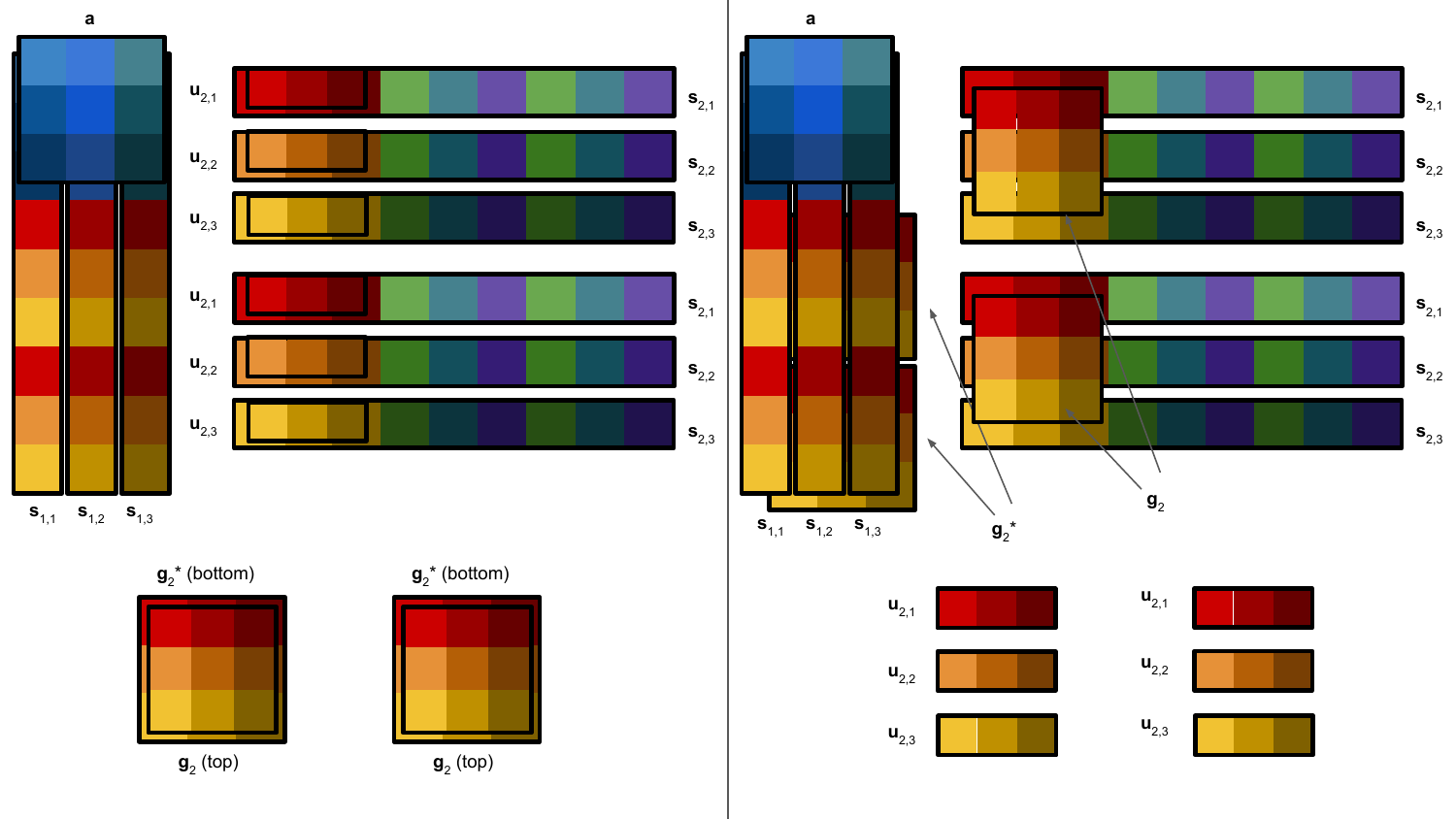}
    \caption{The amplifying translator gadget, before and after it is triggered to propagate the signal forward by one layer. When $\mathbf{s}_{1,1}$, $\mathbf{s}_{1,2}$ and $\mathbf{s}_{1,3}$ have been brought together, instead of directly replacing all the $\mathbf{u}_{2,j}$ monomers, they can split two $\{\mathbf{g}_2, \mathbf{g}_2^*\}$ complexes, and the $\mathbf{g}_2$ monomers can replace the $\mathbf{u}_{2,j}$ monomers.}
    \label{fig:translator_amplifying}
    
    \centering
    \includegraphics[width=0.95\columnwidth]{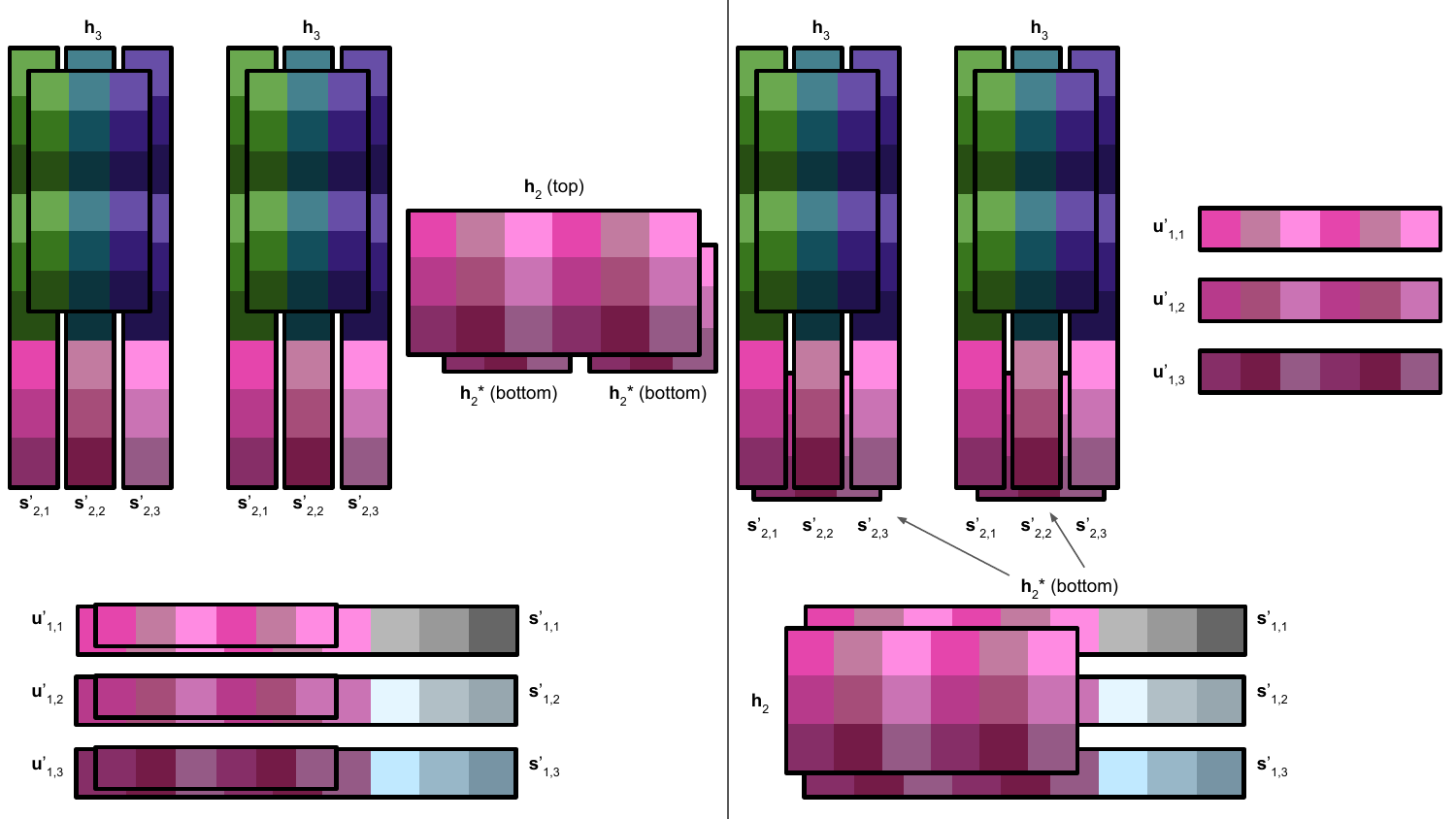}
    \caption{The converging translator gadget, before and after it is triggered to propagate the signal forward by one layer. When two copies each of $\mathbf{s}'_{2,1}$, $\mathbf{s}'_{2,2}$ and $\mathbf{s}'_{2,3}$ have been brought together into two complexes, instead of directly replacing all the $\mathbf{u}'_{1,j}$ monomers, they can split a $\{\mathbf{h}_2, \mathbf{h}_2^*, \mathbf{h}_2^*\}$ complex, and the $\mathbf{h}_2$ monomer can replace the $\mathbf{u}'_{1,j}$ monomers.\\Note that in this image, the only way to propagate the signal efficiently would be to use the translator gadget; not using it will be one unit of entropy less efficient, requiring 3 splits and 4 merges, showing that this TBN no longer has an entropy gap. If instead we hadn't used the translator gadgets in the previous layer, then all six $\mathbf{s}'_{2,j}$ monomers in this image would be together in a single complex rather than on two separate complexes, in which case it would be equally efficient to either use this translator gadget or directly displace the $\mathbf{u}'_{1,j}$. }
    \label{fig:translator_converging}
\end{figure}

The TBN $T_{n,k}^\mathrm{a}$ will, in the process of amplifying the signal of the analyte, form a single polymer of exponential size. This isn't an issue in the theoretical TBN model, but it is a practical issue because there is no way to design these monomers so that this large polymer would form.\footnote{The binding graph of the monomers within this giant polymer contains many complete $k$-ary trees of depth $n$ as subgraphs. If each of the nodes of this graph is a real molecule that takes up some volume, it will be impossible to embed the whole graph within 3-dimensional space as $n$ grows.}

This can be solved by adding ``translator gadgets''. These gadgets' job is to mediate between consecutive layers. Instead of monomers from one layer directly binding to monomers from the next layer, they can split apart these translator gadgets with half of the gadget going to each layer. In exchange, the TBN will no longer have exactly one stable configuration when the analyte is present, as in the TBN model, the use of these translator gates will be purely ``optional''.

We define a new TBN $\widetilde{T}_{n,k}$ (as well as $\widetilde{T}^\mathrm{a}_{n,k}$, which is obtained by adding the analyte $\mathbf{a}$). We start with the TBN ${T}_{n,k}$. To assist with the amplification step, we add monomer types $\mathbf{g}_i$ and $\mathbf{g}_i^*$ for each $2 \leq i \leq n$. Each $\mathbf{g}_i$ consists of one copy of each unstarred domain $(i,j,\ell)$ for each $1 \leq j, \ell \leq k$. Each $\mathbf{g}_i^*$ consists of the same domains but all starred. Each of these monomers has $2^{i-1}$ copies. The use of these gadgets can be seen in \cref{fig:translator_amplifying}.

To assist with the convergence step, we add monomer types $\mathbf{h}_i$ and $\mathbf{h}_i^*$ for each $2 \leq i \leq n+1$. Each $\mathbf{h}_i$ has \emph{two} copies of each unstarred domain $(i,j,\ell)'$ for each $1 \leq j, \ell \leq k$. Each $\mathbf{h}_i^*$ has only one copy of each of the corresponding starred domains. There are $2^{i-1}$ copies of each $\mathbf{h}_i$ and $2^i$ copies of each $\mathbf{h}_i^*$. The use of these gadgets can be seen in \cref{fig:translator_converging}.

\begin{thmrep}
    \label{thm:translatorgadgets}
    Let $\widetilde{T} = \widetilde{T}_{n,k}$ and $\widetilde{T}^\mathrm{a} = \widetilde{T}^\mathrm{a}_{n,k}$ be as described. Then:
    \begin{enumerate}
        \item $\widetilde{T}$ has exactly one stable configuration $\widetilde{\sigma}_{n,k}$, and $d(\widetilde{T}, \widetilde{T}^\mathrm{a}) > 2^n$. 
        \item $\widetilde{T}$ has an entropy gap of $\frac k 2$, and $\widetilde{T}^\mathrm{a}$ has the property that all of its configurations $\alpha$ that are within distance to stability $\frac k 2$ satisfy $d(\widetilde{\sigma}_{n,k}, \alpha) > 2^n$. 
        \item $\widetilde{T} = \widetilde{T}_{n,k}$ uses $\mathcal O(nk)$ total monomer types, $\mathcal O(nk^2)$ domain types, and $\mathcal O(k^2)$ domains per monomer.
        \item The unique stable configuration of $\widetilde{T}$ has $\mathcal O(k)$ monomers in its largest polymer. There is a stable configuration of $\widetilde{T}^\mathrm{a}$ sharing this property. 
    \end{enumerate}
\end{thmrep}

Compared to \cref{thm:mainthm}, this theorem trades away the condition that both TBNs have only a single stable configuration in exchange for the post-analyte TBN having a configuration with $\mathcal O(k)$ monomers per polymer, whereas the previous construction has roughly $k \cdot 2^n$ monomers in a single polymer.

The second condition is somewhat complex. This complexity's necessity is explained by \cref{fig:translator_converging}. In that figure, if we propagate signal without using the translator gadget, we arrive at a configuration that is saturated but has only one fewer complex than a stable configuration. However, such near-stable configurations are still very different from the stable configuration of $\widetilde{T}_{n,k}$, so it is still possible to distinguish the two TBNs with an amplification factor proportional to $2^n$ and a resilience to false positives and negatives proportional to $k$.

The proof of this theorem is very similar to that of \cref{thm:mainthm}, and is also left to the appendix.

\begin{proof}
    Recall the constructions of $\widetilde{T}_{n,k}$ and $\widetilde{T}^{\mathrm{a}}_{n,k}$ from \cref{subsec:translators}. Our argument will be very similar to that of \cref{thm:mainthm} (i.e., the above lemmas), except we need to account for the extra monomer types. 
    
    First, consider $\widetilde{T}_{n,k}$, where $\mathbf{a}$ is absent. We wish to show that its stable configuration looks like that of $T_{n,k}$, with the added $\mathbf{g}$ and $\mathbf{h}$ monomers only binding to added $\mathbf{g}^*$ and $\mathbf{h}^*$ monomers respectively. We order the merges to get to a saturated configuration in essentially the same order as we did in analyzing $T_{n,k}$: first we will make all merges necessary to cover all $(1, j, \ell)^*$ sites, then $(2, j, \ell)^*$, and so on up to $(n+1, j, \ell)^*$, then $(n, j, \ell)'^*$, and so on. As before, at each step, we will see that we cannot make merges in any way other than those in the desired stable configuration without needing $k - 1$ extra merges for that step. 

    For $(i, j, \ell)^*$ sites, at each step, there is exactly one way to cover all starred sites by making one merge per monomer with these starred sites: we cover each $\mathbf{s}_{i,j}$ with a $\mathbf{u}_{i,j}$ and each $\mathbf{g}_i^*$ with a $\mathbf{g}_i$. In particular, we already know from the proof of \cref{lem:pre-analyte-one-stable-config} that this is true for the $\mathbf{s}_{i,j}$ if we only use $\mathbf{s}_{i-1,j}$ and $\mathbf{u}_{i,j}$ to cover it, and that we will otherwise need to make $k - 1$ extra merges. Clearly we also cannot cover $\mathbf{g}_i^*$ with anything other than $\mathbf{g}_i$ without making $k$ merges to cover it (and thus $k - 1$ extra merges), so we cannot use $\mathbf{g}_i$ to cover $\mathbf{s}_{i,j}$.

    Likewise, for $(i, j, \ell)'^*$ sites, we only need to observe that each $\mathbf{h}_i^*$ monomer can only be covered in a single merge by $\mathbf{h}_i$, so any other way of making merges necessarily involves $k - 1$ extra merges. So by the same argument as in \cref{lem:pre-analyte-one-stable-config} and \cref{cor:pre_analyte_entropy_gap}, $\widetilde{T}_{n,k}$ has exactly one stable configuration with an entropy gap of $\frac k 2$. This configuration has $1 + \frac k 2$ monomers in the polymer containing $\mathbf{p}^*$ and all the $\mathbf{p}_j$, 3 monomers in each $\{\mathbf{h}_i, \mathbf{h}_i^*, \mathbf{h}_i^*\}$ polymer, and 2 monomers in each other polymer.

    Now, consider $\widetilde{T}^\mathrm{a}_{n,k}$, where $\mathbf{a}$ is present. If we take the stable configuration of $T_{n,k}^\mathrm{a}$ and simply put all $\mathbf{g}$ monomers into $\{\mathbf{g}_i, \mathbf{g}_i^*\}$ polymers, and all the $\mathbf{h}$ monomers into $\{\mathbf{h}_i, \mathbf{h}_i^*, \mathbf{h}_i^*\}$ polymers, we have still made exactly one merge per monomer with any starred sites, so by \cref{cor:stable-if-requires-exactly-s-merges} it is stable. If we then carry out the shifts described in \cref{fig:translator_amplifying} and \cref{fig:translator_converging}, an equal number of merges and splits are made at each step, so the resulting saturated configuration is still stable. Additionally, in this configuration, the largest polymers have $k + 3$ monomers (specifically, those containing a set of $\mathbf{s}_{i,j}$ along with one copy of $\mathbf{g}_i$ and two copies of $\mathbf{g}_{i+1}^*$). 

    All that remains to show is that all configurations of $\widetilde{T}^\mathrm{a}_{n,k}$ that are within $\frac k 2$ distance to stability have exponentially many different polymers from the stable configuration of $\widetilde{T}_{n,k}$. We will do this by showing that all $\mathbf{u}$ and $\mathbf{u}'$ monomers are free in all such configurations. 
    
    Again, this argument is very similar to the argument without the translator gadgets in \cref{lem:post_analyte_entropy_gap}. We consider merges to cover starred sites in the opposite order of the above argument for $\widetilde{T}_{n,k}$. First, consider the merges necessary to cover all the $(1,j,\ell)'$ starred sites (on $\mathbf{p}^*$). Like before, they must be covered by either all the $\mathbf{u}'_{1,j}$ monomers or all the $\mathbf{p}_j$ monomers, but using the latter gives a feed-forward configuration in which $\frac k 2 - 1$ extra merges have already been made. Thus, to be within $\frac k 2$ distance to stability, we must use the $\mathbf{s}'_{1,j}$. Next, for the $(2, j, \ell)'$ starred sites, with the merges already made, there are two copies of each of these sites all together on the polymer containing all the $\mathbf{s}'_{1,j}$, and one copy of each site on each of the two $\mathbf{h}_2^*$ monomers. If we are to merge any $\mathbf{u}'_{1,j}$ monomers to any of these in such a way that they cannot be split off without the result still being saturated, then we must merge all of the $\mathbf{u}'_{1,j}$ into one polymer. Like with the argument for $\widetilde{T}^\mathrm{a}_{n,k}$, we see that this results in $k - 1$ extra merges compared to a stable configuration. Thus, we cannot use any $\mathbf{u}'_{1,j}$, and these sites must be covered by the $\mathbf{s}'_{2,j}$ and $\mathbf{h}_2$ monomers. 
    
    We may do this either by using the $\mathbf{h}_2$ to cover both $\mathbf{h}_2^*$ (in effect, not using the translator gadget) or by using $\mathbf{h}_2$ to cover all the $\mathbf{s}'_{1,j}$. The only difference in terms of the argument is that in the former case \emph{all} of the $\mathbf{s}_{2,j}$ will be brought together in a single polymer, and in the latter case they will be split between two polymers. In the former case, it may require one extra merge to use translator gates in the next layer; however, either way, the same argument on each other layer in sequence shows that we cannot use any $\mathbf{u}'_{i,j}$ monomers without suffering $\frac k 2 - 1$ extra merges. Likewise, the exact same argument shows that the same thing is true of $\mathbf{u}_{i,j}$ monomers, necessitating that in any configuration that makes fewer than $\frac k 2 - 1$ extraneous merges, all exponentially many $\mathbf{u}$ and $\mathbf{u}'$ monomers must be free, as desired.
\end{proof}

\section{Upper Limit on TBN Signal Amplification}
\label{sec:upperlimit}
In this section, we show the following theorem providing an upper bound on the distance between a TBN before and after adding a single copy of a monomer, showing that the distance is at most double-exponential in the ``size'' of the system: 

\newcommand{\thmUpperlimit}{
    \label{thm:upperlimit}
    Let $T$ be a TBN with $d$ domain types, $m$ monomer types, and at most $a$ domains on each monomer. Let $n = \max\{d, m, a\}$. Let $T'$ be $T$ with one extra copy of some monomer. Then $d(T, T') \leq n^{8n^{7n^2}}$.
}


\opt{append,inline}{
    \begin{thmrep}
        \thmUpperlimit
    \end{thmrep}
}

\opt{dnafinal,strip}{
    \begin{thm}
        \thmUpperlimit
    \end{thm}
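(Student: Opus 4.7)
The plan is to encode the set of stable configurations of a TBN as the optimal solutions of an integer linear program and then invoke a sensitivity theorem for integer programming. Concretely, I would introduce a variable $x_P$ for every self-saturated polymer type $P$, i.e., every multiset of monomer types whose union has no uncovered starred site. The constraints are the $m$ monomer-conservation equations $\sum_P c_{P,i}\, x_P = N_i$ for each monomer type $i$, where $c_{P,i}$ counts how many copies of monomer type $i$ appear in $P$ and $N_i$ is the total number of copies of monomer type $i$ in $T$. The objective is $\max \sum_P x_P$, which matches the definition of stability, so stable configurations correspond exactly to optimal integer solutions of this ILP. Passing from $T$ to $T'$ corresponds to incrementing one coordinate of the right-hand side by exactly one.

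The main workhorse is then a sensitivity result for ILPs in the Cook--Gerards--Schrijver / Eisenbrand--Weismantel style: if the constraint matrix $A$ with $m$ rows has entries of absolute value at most $\Delta$, and we change the right-hand side by a vector of $L^1$ norm at most $1$, then any optimal integer solution for the original right-hand side lies within $L^1$ distance at most $m(\Delta+1)^{O(m)}$ of some optimal integer solution for the perturbed right-hand side. Transferred through the ILP encoding above, this immediately yields $d(T,T') \le m(\Delta+1)^{O(m)}$, so the whole proof reduces to fitting our problem into this template and reading off the right values of $m$ and $\Delta$.

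The main obstacle, and the reason the bound ends up doubly rather than singly exponential, is that the ILP is not a priori finite-dimensional: polymer types can be arbitrarily large, so both the number of columns and the largest matrix entry $\Delta$ are unbounded without further work. The heart of the proof is therefore a singly exponential bound $s = s(n)$ on the number of monomers in any polymer appearing in some stable configuration. I would obtain this via a Graver-basis or integer Carath\'eodory argument: if a polymer $P$ in a stable configuration contained more than $s(n)$ monomers, then its defining column in $A$ could be written as a nontrivial non-negative integer combination of columns corresponding to strictly smaller self-saturated polymer types, and such a decomposition would induce a saturation-preserving split that strictly increases the polymer count, contradicting stability of the original configuration.

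Putting the pieces together, $s(n)$ comes out singly exponential in $n$, so $\Delta \le s(n)$, and since $m \le n$ the sensitivity expression $m(\Delta+1)^{O(m)}$ becomes doubly exponential in $n$. A careful accounting of the implicit constants in the sensitivity theorem, combined with the explicit polymer-size bound, should yield the stated $n^{8n^{7n^2}}$. I expect the hardest technical step to be proving the polymer-size bound in its correct singly exponential form and tracking constants sharply enough to produce an exponent matching the stated shape, rather than a looser doubly exponential bound.
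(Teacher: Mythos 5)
Your high-level strategy matches the paper's: encode stable configurations as optimal solutions of an integer program with one variable per admissible polymer type and one monomer-conservation equality per monomer type, bound the size of polymers that can appear (the paper imports the bound $2(m+d)(ad)^{2d+3}$ on polymers in the polymer basis from prior work, which is the singly exponential $s(n)$ you want; your Carath\'eodory-style splitting argument is the right spirit but is not the bottleneck), and then apply an IP sensitivity theorem to a unit change in the right-hand side. The genuine difference is the sensitivity tool. The paper uses Blair--Jeroslow, which controls only the optimal \emph{value} as a function of the right-hand side; to convert that into control of individual polymer counts, the paper pins down the variables one at a time via a lexicographic sequence of auxiliary IPs, and iterating the value-sensitivity bound over all $P$ polymer types compounds multiplicatively, producing the factor $(2M_1)^P$ that makes the final bound doubly exponential. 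You instead invoke a solution-proximity/sensitivity theorem (Cook--Gerards--Schrijver--Tardos / Eisenbrand--Weismantel), which directly bounds the distance from any optimal solution of one IP to some optimal solution of the perturbed IP, with no per-variable iteration.

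This is where you need to be careful, because your own arithmetic does not produce the claimed conclusion. With $m\le n$ rows and $\Delta\le s(n)$ singly exponential, the expression $m(\Delta+1)^{O(m)}$ is \emph{singly} exponential in $n$, not doubly exponential; even the classical form of the sensitivity theorem, whose bound scales with the number of columns times the maximum subdeterminant, stays singly exponential here since both quantities are. So either (a) the theorem you are invoking does not apply in the column-independent form you state (you must check its exact hypotheses: equality-form IPs with nonnegativity, feasibility and boundedness of both programs, identical objective and constraint matrix), in which case you have not actually derived the stated bound; or (b) it does apply, in which case you have proved a singly exponential upper bound, which is strictly stronger than the theorem and resolves an open question the paper explicitly poses --- a possibility that should make you triple-check every hypothesis rather than wave at ``a careful accounting of constants.'' Separately, you have skipped a necessary step: adding the analyte can flip which site type of a complementary pair is limiting, which changes the set of admissible (self-saturated) polymer types, so the two IPs do not share a column set. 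The paper handles this with a dedicated lemma showing any polymer in one basis but not the other appears at most $|\mathbf{a}|$ times in any saturated configuration; your argument needs an analogue before the sensitivity theorem can be applied to a common constraint matrix.
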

}

Recall \cref{def:distance} for the distance between TBNs. Essentially, this theorem is saying that adding a single copy of some monomer can only impact doubly exponentially many total polymers, no matter how many total copies of each monomer are in the TBN. 

Our strategy for proving this theorem is to fix some ordering on polymer types, and bound the distance between the lexicographically earliest stable configuration of an arbitrary TBN under that ordering before and after adding a single copy of some monomer. To bound this distance, we cast the problem of finding stable configurations of a TBN as an integer program (IP), and use methods from the theory of integer programming value functions to give a bound on how much the solution to this IP can change given a small change in the underlying TBN.

\begin{proof}
\opt{append,inline}{
\label{proof:upperlimit}
Recall we are trying to sequentially bound the difference between the amounts $x_\mathbf{P}$ of polymers in the lexicographically earliest configurations of $T$ and $T'$, an arbitrary TBN before and after adding a copy of a monomer $\mathbf{a}$. We defined a sequence of IPs in \cref{eqn:IP_variable_comparison} whose optimal values give the amounts of polymers in these configurations. Throughout we will be loose with coefficients as our main concern is showing that the bound is doubly exponential.

For those $x_\mathbf{P}$ representing polymers in only one polymer basis, we already know that their difference is bounded by $|\mathbf{a}|$. To be conservative, we will both assume that we must account for this difference for all $P$ variables, and then carry out the rest of the analysis as though we must account for all $P$ variables being in $\pbasis(T) \cap \pbasis(T')$ (as these latter variables will actually contribute more to the analysis).

Under these assumptions, we can bound the difference between $P_{tot}$ (the total number of polymers in a stable configuration) for the two TBNs: $P_{tot}$ is the optimal value of the IP in \cref{eqn:IP} when adding in constraints that all these variables that are only in one of the two polymer bases have specific values (which are either 0, or bounded by $|\mathbf{a}|$). \cref{eqn:IP} differs between the two cases by these $P$ constraints differing by up to $|a|$ and one additional constraint $T(\mathbf{a})$ differing by 1 because of the one extra copy of $\mathbf{a}$. Therefore, by \cref{eqn:upper_limit_to_ip_change}, $P_{tot}$ differs between the two by at most $M_1(P|\mathbf{a}| + 1) + K_3$.

Now, we account for all variables that are in $\pbasis(T) \cap \pbasis(T')$. For the first such $x_\mathbf{P}$, we see that $P_{tot}$ may have changed by $M_1(P|\mathbf{a}| + 1) + K_3$, $T(\mathbf{a})$ has changed by a fixed 1, and up to $P$ of the variables representing polymers in only one polymer basis may have been fixed in value. Thus, we can bound the difference between the norm of the right-hand side constraint vectors in the two versions of \cref{eqn:IP_variable_comparison} for this first such variable by:
\begin{equation}
    P|\mathbf{a}| + M_1(P|\mathbf{a}| + 1) + K_3 + 1 \leq 2M_1P|\mathbf{a}| + K_3.
\end{equation}
It follows by \cref{eqn:upper_limit_to_ip_change} that the difference between the value of this first $x_\mathbf{P}$ before and after adding one copy of a monomer is bounded by $M_1(2M_1P|\mathbf{a}| + K_3) + K_3$. Then for the next variable in order, the value of $x_\mathbf{P}$ is baked in as the right-hand side of a constraint, meaning that this difference now contributes to the value $\left|\left|v - w\right|\right|$ in \cref{eqn:upper_limit_to_ip_change}. Thus, if $\mathbf{P}_i$ denotes the $i$th polymer in our ordering, we obtain a recurrence relation yielding a bound $B_i$ on the difference between $x_{\mathbf{P}_i}$ before and after adding $\mathbf{a}$, for $1 \leq i \leq P$ (where $B_0$ is defined as a base case):
\begin{equation}
\begin{split}
    &B_0 = 2M_1P|\mathbf{a}| + K_3\\
    &B_i = K_3 + M_1\sum_{j=0}^{i-1}B_{j}
\end{split}
\end{equation}

We can bound $\sum_{j=0}^{i-1}B_{j}$ by $2B_{i-1}$ as this sequence clearly grows faster than $2^i$, which allows us to solve the recurrence to see that all terms of $B_i$ are subleading to $K_3 \cdot (2M_1)^i$ and each term can be safely bounded by $2K_3 \cdot (2M_1)^i$. 

Thus, the total distance between these stable configurations is bounded by the sum of $P$ terms that each have this bound when we replace $i$ with $P$, giving a bound of $2PK_3 \cdot (2M_1)^P$.

We now bound these individual values. Let $S$ be the maximum number of monomers in any polymer. The value $K_3$ in \cite{blair1982ip} is constructed from three other values as the expression $M_1M_3 + M_2$. These values can each be bounded, some based on another constant $K_2$ which will be bounded next:
\begin{enumerate}
    \item $M_1$ is essentially a bound on how quickly the objective value of the corresponding real-valued linear program can change as the right-hand side changes, as described earlier. The optimal value of a linear program is always one of some set of $\lambda_i \cdot v$ where the $\lambda_i$ vectors come from the extreme points of a polyhedron based on the IP; \cite{blair1982ip} bounds $M_1$ by the maximum norm of these $\lambda_i$. A bound in our case is $P$: if there was a $\lambda_i$ whose elements summed to more than $P$, then this would imply we could get a configuration with more total polymers (as the sum of polymer counts is our objective function) than monomers, which is impossible. 
    \item $M_2$ is the maximum value of the objective function when all variables are at most $K_2$, which in this case is $PK_2$. 
    \item $M_3$ is the maximum norm of the constraint matrix times the vector of variables when all variables are at most $K_2$. In TBN language, this gives the number of monomers present in a configuration with $K_2$ copies of every polymer in the polymer basis. The elements of the constraint matrix are bounded by $S$, so $M_3$ is bounded by $SPK_2$.
\end{enumerate}

Thus, $K_3 = M_1M_3 + M_2 \leq 2SP^2K_2$. The value of $K_2$ is constructed by taking subsets $B$ of variables with the following property: the space of $a_j$ such that $\sum a_jx_j = 0$ is one-dimensional. In TBN language, an element of this subspace corresponds to a pair of configurations $\alpha$ and $\beta$ of some TBN such that the polymer types present in $\alpha$ and the polymer types present in $\beta$ are disjoint subsets of $B$. In other words, it represents a way to take a configuration using polymer types in $B$ and reconfigure it to use entirely different polymer types in $B$. This can be seen by letting positive $a_j$ give counts of polymers in $\alpha$ and negative $a_j$ give counts of polymers in $\beta$. The value $K_2$ is as large as the greatest number of a single polymer type that may be necessary for such a reconfiguration.

Then to bound this, we observe that we can find the solutions to this homogeneous system of equations by doing Gaussian elimination on an at most $n \times P$ (as there are at most $n$ monomer types) matrix whose elements are bounded by $S$. A simple bound for the largest element that can occur in this Gaussian elimination is $S^{n+1}$, so this is also a bound on our constant $K_2$. 

Thus, $K_3 \leq 2S^{n+2}P^2$. 

Finally, we bound $S$ and $P$. \cite{doty2017tbn} shows that for a TBN with $d$ domain types, $m$ monomer types and $a$ domains per monomer, the largest polymer in any stable configuration has size at most $2(m+d)(ad)^{2d+3}$. Since $n$ is a uniform bound on all these values, the maximum size of any polymer is $4n \cdot (n^{4n+6}) \leq n^{5n}$. That is to say, $S \leq n^{5n}$. Since there are at most $n$ different monomer types, the number of polymers of a given size $i$ that can be formed out of them is at most $i$ multichoose $n$ = $\binom{i + n - 1}{n} \leq (i+n-1)^n$. Therefore, the total number of possible polymer types in the polymer basis is bounded by:

\begin{equation}
    \begin{split}
        P &\leq \sum_{i = 1}^{S} (i + n - 1)^n\\
        &\leq S \cdot (S + n - 1)^n\\
        &\leq n^{5n}(n^{5n} + n - 1)^n\\
        &\leq n^{6n^2}.
    \end{split}
\end{equation}

Thus, we finally obtain the following bound on the distance between these configurations $\sigma$ and $\sigma'$:

\begin{equation}
    \begin{split}
        d(\sigma, \sigma') &\leq 2PK_3 \cdot (2M_1)^P\\
        &\leq 4S^{n+2}P^3 \cdot (2P)^P\\
        &\leq n^{6n^2}P^3 \cdot (2P)^P\\
        &\leq n^{24n^2} \cdot (2n^{6n^2})^{n^{6n^2}}\\
        &\leq n^{24n^2} \cdot (n^{7n^{7n^2}})\\
        &\leq n^{8n^{7n^2}}.
    \end{split}
\end{equation}
}
\opt{dnafinal,strip}{
A complete proof including all technical details can found in the full version of this paper on arxiv. Here we present only the main ideas of the proof.
}
\end{proof}

We first introduce a definition from \cite{haley2020tbn} and some notation that was unnecessary in previous sections. 

\begin{defn}
    \label{def:polymerbasis}
    Given a (star-limiting) TBN $T$, the \emph{polymer basis} of $T$, denoted $\pbasis(T)$, is the set of polymers $\mathbf{P}$ such that both of the following hold:
    \begin{itemize}
        \item $\mathbf{P}$ appears in some saturated configuration of a star-limiting TBN using the same monomer types as $T$.
        \item $\mathbf{P}$ cannot be split into two or more self-saturated polymers.
    \end{itemize}
\end{defn}

The polymer basis is a useful construction because it is known to describe exactly those polymer types that may appear in stable configurations of $T$. It is always finite, and we will bound its size later.

Given a TBN $T$, let $\mtypes(T)$ denote its monomer types, and let $T(\mathbf{m})$ denote the count of monomer $\mathbf{m}$ in $T$. Given a polymer $\mathbf{P}$ and a monomer type $\mathbf{m} \in \mtypes(T)$, let $\mathbf{P}(\mathbf{\mathbf{m}})$ represent the count of monomer $\mathbf{m}$ in polymer $\mathbf{P}$. 

Suppose for the rest of this section that we have a TBN $T$ to which we wish to add a single copy of some monomer $\mathbf{a}$ (which may or may not exist in $T$). Let $T'$ be $T$ with $\mathbf{a}$ added. 

\subsection{Finding Stable Configurations via Integer Programming}

Prior work~\cite{haley2020tbn} has shown that the problem of finding the stable configurations of a TBN can be cast as an IP. There are multiple different formulations; we will use a formulation that is better for the purpose of reasoning theoretically about TBN behavior. 

Let $\{x_\mathbf{P} : \mathbf{P} \in \pbasis(T)\}$ be variables each representing the count of polymer $\mathbf{P}$ in a configuration of $T$. Then consider the following integer programming problem:

\begin{equation}
\label{eqn:IP}
\begin{split}
    \text{max}& \quad &&\sum_{\mathbf{P} \in \pbasis(T)} x_\mathbf{P} \\
    \text{s.t.}& \quad &&\sum_{\mathbf{P} \in \pbasis(T)} \mathbf{P}(\mathbf{m}) x_\mathbf{P} = T(\mathbf{m}) \quad &&&\forall \mathbf{m} \in \mtypes(T) \\
    &&&x_\mathbf{P} \in \mathbb{N} \quad &&&\forall \mathbf{P} \in \pbasis(T)
\end{split}
\end{equation}

Intuitively, the linear equality constraints above express ``monomer conservation'':
the total count of each monomer in $T$ should equal the total number of times it appears among all polymers. 

The following was shown in \cite{haley2020tbn}; for the sake of self-containment, we show it here as well

\begin{prop}
\label{prop:IP-correspondence}
The optimal solutions to the IP~\eqref{eqn:IP} correspond exactly to stable configurations of $T$.
\end{prop}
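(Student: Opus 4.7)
The plan is to establish a two-way correspondence between feasible IP solutions and saturated configurations of $T$ that use only polymer types from $\pbasis(T)$, and then to argue that the optima on both sides coincide with stable configurations. The objective $\sum_\mathbf{P} x_\mathbf{P}$ is set up precisely to mirror the polymer count, which is what stability maximizes, so the match in objective values will be immediate once the correspondence is in place.

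First I would handle the easy direction: given any feasible solution $\{x_\mathbf{P}\}_{\mathbf{P} \in \pbasis(T)}$, form a configuration $\alpha$ containing $x_\mathbf{P}$ copies of polymer $\mathbf{P}$ for each $\mathbf{P} \in \pbasis(T)$. The monomer-conservation equalities $\sum_\mathbf{P} \mathbf{P}(\mathbf{m}) x_\mathbf{P} = T(\mathbf{m})$ guarantee that this multiset of polymers partitions exactly the monomers of $T$, so $\alpha$ is a legitimate configuration. Since every $\mathbf{P} \in \pbasis(T)$ is self-saturated (by the first clause of \cref{def:polymerbasis}, as $\mathbf{P}$ appears in some saturated configuration), $\alpha$ is saturated. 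The number of polymers in $\alpha$ equals the IP objective value $\sum_\mathbf{P} x_\mathbf{P}$.

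Next I would handle the reverse direction, which is where the main subtlety lives. Given any stable configuration $\alpha$ of $T$, I claim every polymer $\mathbf{Q}$ of $\alpha$ lies in $\pbasis(T)$. The first clause of \cref{def:polymerbasis} is satisfied because $\mathbf{Q}$ appears in the saturated configuration $\alpha$. For the second clause, suppose for contradiction that $\mathbf{Q}$ splits into two or more self-saturated polymers $\mathbf{Q}_1, \ldots, \mathbf{Q}_r$ with $r \ge 2$. Replacing $\mathbf{Q}$ by $\mathbf{Q}_1, \ldots, \mathbf{Q}_r$ in $\alpha$ yields a new configuration $\alpha'$ that is still saturated (the $\mathbf{Q}_i$ are self-saturated and no other polymer is touched) and has strictly more polymers than $\alpha$, contradicting stability. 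Hence $\alpha$ only uses basis polymers, and setting $x_\mathbf{P}$ equal to the count of $\mathbf{P}$ in $\alpha$ yields a feasible IP solution whose objective value equals the polymer count of $\alpha$.

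Finally I would put the two halves together. The first direction shows that every feasible IP solution has objective value equal to the polymer count of some saturated configuration, so the IP optimum is at most the maximum polymer count over saturated configurations, i.e.\ the polymer count of any stable configuration. The second direction shows that every stable configuration arises from a feasible IP solution of the same value, so the IP optimum is at least that count. Therefore the two extrema coincide, and the map $\alpha \mapsto \{x_\mathbf{P}\}$ restricts to a bijection between stable configurations of $T$ and optimal solutions of \eqref{eqn:IP}. The main obstacle is the claim that every polymer in a stable configuration must belong to $\pbasis(T)$; everything else is essentially bookkeeping once that is in hand.
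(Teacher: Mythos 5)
Your proof is correct and follows essentially the same route as the paper: identify feasible IP solutions with saturated configurations built from basis polymers, and match the objective value with the polymer count on both sides. The only difference is that you explicitly prove the claim that every polymer of a stable configuration lies in $\pbasis(T)$ (via the splitting argument), whereas the paper simply invokes this as a known property of the polymer basis from prior work; your added argument is sound.
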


\begin{proof}
If the variables $x_\mathbf{P}$ form a feasible solution, then those counts of polymers are a valid configuration because they exactly use up all monomers. If the solution is optimal, then there is no saturated configuration with more polymers (as only polymers from $\pbasis(T)$ can show up in stable configurations), so the configuration is stable. Conversely, if a configuration $\sigma$ is stable then it can be translated into a feasible solution to the IP because it only uses polymers from $\pbasis(T)$ and obeys monomer conservation. If there were a solution with a greater objective function, then this would translate to a configuration with more complexes that is still saturated (because all polymers in the polymer basis are self-saturated), contradicting the assumption of $\sigma$'s stability.
\end{proof}

We observe that adding an extra copy of some monomer to a TBN corresponds to changing the right-hand side of one of the constraints of this IP by one. Note that this is true even if we add a copy of some monomer for which there were 0 copies, as we may still include variables for polymers that contain that monomer in the former IP and simply consider there to be 0 copies of the monomer. Therefore, we are interested in sensitivity analysis of how quickly a solution to an IP can change as the right-hand sides of constraints change. 

However, there is one edge case we must account for first. It is possible that $T$ and $T'$ have different polymer bases. This is because of the first requirement in \cref{def:polymerbasis} requiring that the polymer basis respects that starred sites are limiting. If we add a single copy of a monomer, this may change which sites are limiting, if $\mathbf{a}$ has more copies of a starred site than $T$ had excess copies of the unstarred site. We cannot include variables for such polymers in the IP formulation without taking extra precautions, as if we do there may be optimal solutions that don't correspond to saturated configurations. Therefore, we will first account for how many copies of such a polymer $T$ and $T'$ may differ by:

\begin{lem}
    \label{lem:what_if_starred_sites_flip}
    Suppose that some polymer $\mathbf{P}$ is exactly one of $\pbasis(T')$ and $\pbasis(T)$. Then any saturated configuration of $T'$ contains at most $|\mathbf{a}|$ copies of $\mathbf{P}$, where $|\mathbf{a}|$ denotes the number of sites on $\mathbf{a}$. 
\end{lem}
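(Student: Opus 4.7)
The plan is to attribute the difference between the two polymer bases to a change in the star-limiting convention caused by adding $\mathbf{a}$, and then use that change to constrain $\mathbf{P}$'s multiplicity in any saturated configuration of $T'$. Since $T$ and $T'$ share the same monomer types (treating a possibly-new $\mathbf{a}$ as having zero count in $T$), and since the ``cannot be split'' clause of \cref{def:polymerbasis} depends only on what ``self-saturated'' means, which in turn depends only on which site of each complementary pair is labeled starred, the only way $\mathbf{P}$ can lie in exactly one of the two bases is that the star-limiting convention on some site type $\tau$ flips between $T$ and $T'$. Since $T$ was already star-limiting, such a flip requires $\mathbf{a}$ to carry enough excess $\tau^*$ instances to reverse the inequality, so the net excess $E$ of $\tau^*$ over $\tau$ across all monomers of $T'$ is at least $1$ and at most the number of $\tau^*$ sites on $\mathbf{a}$, which is at most $|\mathbf{a}|$.

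Next I would show that the subcase where $\mathbf{P}$ is self-saturated in both conventions cannot give rise to mismatched splittability. If $\mathbf{P}$ is self-saturated in both conventions, then for every flipped type $\tau$ the two conditions force $\#\tau = \#\tau^*$ in $\mathbf{P}$. Given any split of $\mathbf{P}$ into pieces, each self-saturated in one convention, the per-piece inequality ($\#\tau^* \le \#\tau$ or $\#\tau \le \#\tau^*$ depending on direction) combined with the global equality $\#\tau = \#\tau^*$ in $\mathbf{P}$ forces equality on each piece, so every piece is self-saturated in the other convention as well. Thus splittability transfers between conventions in this subcase, and $\mathbf{P}$ cannot be in exactly one of $\pbasis(T), \pbasis(T')$ through the splittability clause alone.

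So $\mathbf{P}$ must fail self-saturation in exactly one convention. If $\mathbf{P}$ is not self-saturated in $T'$'s convention (the case $\mathbf{P} \in \pbasis(T) \setminus \pbasis(T')$), it cannot appear even once in any saturated configuration of $T'$, which trivially bounds its multiplicity by $0 \le |\mathbf{a}|$. Otherwise $\mathbf{P}$ is self-saturated in $T'$'s convention but not in $T$'s (the case $\mathbf{P} \in \pbasis(T') \setminus \pbasis(T)$), so for some flipped type $\tau$, $\mathbf{P}$ has strictly more $\tau^*$ than $\tau$, contributing at least $1$ to the $\tau^*$-minus-$\tau$ count of any polymer containing it. In a saturated configuration of $T'$, every polymer has a nonnegative such contribution (since $\tau$ is the starred type in $T'$'s convention), and these contributions sum to exactly $E$, bounding the multiplicity of $\mathbf{P}$ by $E \le |\mathbf{a}|$. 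The main obstacle is the splittability transfer argument, which needs the observation that self-saturation in both conventions forces perfect balance on every flipped type; once that is established, the counting bound on the flipped site type is immediate.
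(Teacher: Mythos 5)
Your proposal is correct and follows essentially the same route as the paper's proof: attribute the basis mismatch to a flip in which site of some complementary pair is limiting, bound the total excess of that site in $T'$ by $|\mathbf{a}|$, and observe that each copy of $\mathbf{P}$ contributes at least one to that excess while every polymer in a saturated configuration of $T'$ contributes nonnegatively. The only difference is that you explicitly rule out the splittability clause of \cref{def:polymerbasis} as an independent source of the mismatch (via the balance-forcing argument on flipped types), a point the paper's proof leaves implicit; this is a welcome tightening rather than a new approach.
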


Note that this result is slightly surprising---one natural way that one might try to design a TBN that amplifies signal is by designing the analyte so that it intentionally flips which sites are limiting. This result shows that this is an ineffective strategy: going from 5 excess copies of some site $a$ to 5 excess copies of $a^*$ is seemingly no more helpful in instigating a large change than going from 60 excess copies of $a$ to 50.

\begin{proof}
    If $\mathbf{P}$ is in $\pbasis(T')$ but not $\pbasis(T)$, it must contain an excess of a starred site that was limiting in $T$, but is no longer limiting in $T'$. We see this because $\mathbf{P}$ necessarily occurs in a saturated configuration of the TBN containing precisely the monomers that it is composed of; therefore, in order to not be in $\pbasis(T)$, by definition of the polymer basis, it must be the case that this TBN has different limiting sites than $T'$. 

    Let $a$ denote some such site type, so that $a^*$ is limiting in $T$ and $a$ is limiting in $T'$, and $\mathbf{P}$ contains an excess of $a^*$. Then $\mathbf{a}$ must contain an excess of $a^*$, but it cannot contain more than $|\mathbf{a}|$ excess copies. Therefore, there are at most this many \emph{total} excess copies of $a^*$ in $T'$. It follows that if there are more than $|\mathbf{a}|$ copies of $\mathbf{P}$ in a configuration of $T'$, then those copies of $\mathbf{P}$ collectively have more excess copies of $a^*$ than $T'$ does, so some other polymer in that configuration would have to have an excess of $a$. This implies that such a configuration is not saturated (and therefore also cannot be stable). An identical argument shows that the same is true for polymers in $\pbasis(T)$ but not $\pbasis(T')$.  
\end{proof}

In order to analyze and compare the two IP instances, we need them to have the same variable set. Therefore, we will include variables for all polymers from both polymer bases in both IP formulations. Let $\pbasis(T, T') = \pbasis(T) \cup \pbasis(T')$ denote this merged polymer basis, and let $P = |\pbasis(T, T')|$ denote the total number of possible polymers we must consider, or equivalently the number of variables we will have in these IPs. In each IP, we will have a constraint on each variable representing a polymer not in the relevant polymer basis, that says that that variable must equal zero.  

\subsection{Sensitivity Analysis}

This sensitivity analysis problem of how IPs change as the right-hand sides of constraints change was studied by Blair and Jeroslow in \cite{blair1982ip}. We will not need their full theory, but we will use some of their results and methods.

In Corollary 4.7 of \cite{blair1982ip}, they show that there is a constant $K_3$, independent of the right-hand sides of constraints (in our case, independent of how many copies of each monomer exist) such that:

\begin{equation}
    R_c(v) \leq G_c(v) \leq R_c(v) + K_3,
\end{equation}
where $G_c(v)$ gives the optimal value of the objective function $c$ of a minimization IP as a function of the vector $v$ of right-hand sides of constraints, and $R_c(v)$ gives the optimal value of the same problem when relaxing the constraint that variables must have integer values. The objective function we've shown so far is to maximize the sum of polymer counts rather than minimize, but the same statement applies that the integer and real-valued optimal solutions differ by at most $K_3$. In defining $K_3$, they also show the existence of a constant $M_1$ such that 
\begin{equation}
    |R_c(v) - R_c(w)| \leq M_1\left|\left|v - w\right|\right|,
\end{equation}
where $v$ and $w$ are different vectors for the right-hand sides of constraints. Note that we take all norms as 1-norms. Combining these inequalities, we see that
\begin{equation}
    \label{eqn:upper_limit_to_ip_change}
    |G_c(v) - G_c(w)| \leq M_1\left|\left|v - w\right|\right| + K_3.
\end{equation}
For example, if we want to know the difference between the total number of polymers in a stable configuration before and after adding one copy of a monomer (and if the polymer bases of $T$ and $T'$ are identical), then we care about increasing one element of $v$ by 1, so our bound on this difference is $M_1 + K_3$. This statement applies to maximization and minimization problems.

\subsection{From Optimal Values to Polymer Counts}

For ease of analysis, we order the polymers in $\pbasis(T, T')$ as follows: first we list all the polymers that are not in $\pbasis(T)$, then all the polymers that are not in $\pbasis(T')$, then all the polymers in $\pbasis(T) \cap \pbasis(T')$. We need to show that the number of copies of each individual polymer does not change too much. We do this using a technique similar to Corollary 5.10 in \cite{blair1982ip}.

Let $P_{tot}$ be the total number of polymers in a stable configuration (either before or after adding $\mathbf{a}$, depending on which case we are examining). 

We now define a new sequence of integer programs whose optimal values give polymer counts in the lexicographically earliest stable configuration under this ordering. We do this by finding the value of each variable $x_\mathbf{P}$ in order. This sequence of IP problems is defined separately for both TBNs, before and after adding $\mathbf{a}$. 

For those variables representing a polymer that is in one basis but not the other, we do not need to analyze this IP, so we simply fix such a variable's value to whatever its value is in this lexicographically earliest stable configuration, which will be 0 in one TBN and bounded by $|\mathbf{a}|$ by \cref{lem:what_if_starred_sites_flip} in the other.

Now, to find the value of some particular variable $x_\mathbf{Q}$ in either of the two TBNs where $\mathbf{Q} \in \pbasis(T) \cap \pbasis(T')$, suppose we have already found the value $y_\mathbf{P}$ we wish to fix $x_\mathbf{P}$ to for each $\mathbf{P} < \mathbf{Q}$ under our ordering. Then we define a new IP on all the same variables as follows:

\begin{equation}
\label{eqn:IP_variable_comparison}
\begin{split}
    \text{min}& \quad &&x_\mathbf{Q} \\
    \text{s.t.}& \quad &&\sum_{\mathbf{P} \in \pbasis(T, T')} x_\mathbf{P} = P_{tot} \\
    & \quad &&\sum_{\mathbf{P} \in \pbasis(T, T')} \mathbf{P}(\mathbf{m}) x_\mathbf{P} = T(\mathbf{m}) \quad &&&\forall \mathbf{m} \in m(T) \\
    & \quad &&x_\mathbf{P} = y_\mathbf{P} \quad &&&\forall \mathbf{P} < \mathbf{Q} \\
    &&&x_\mathbf{P} \in \mathbb N \quad &&&\forall \mathbf{P} \in \pbasis(T, T')
\end{split}
\end{equation} 

By construction, this IP gives us the smallest possible value that $x_\mathbf{Q}$ can take on in a stable configuration (as all variables must sum to $P_{tot}$) in which all previous $x_\mathbf{P}$ have fixed values. Then this process gives us a sequence of $P$ (minus however many polymers were only in one polymer basis) different pairs of IP problems that we can sequentially compare to bound the differences between the values of the individual polymer counts in these lexicographically earliest configurations. We can repeatedly apply \cref{eqn:upper_limit_to_ip_change} to each $x_\mathbf{P}$ in turn, as each variable's value before and after adding $\mathbf{a}$ will be given by the optimal value of (\ref{eqn:IP_variable_comparison}) where the only differences are in the right-hand sides of constraints. 
\opt{append,inline}{The remaining proof of \cref{thm:upperlimit} consists mostly of making these bounds concrete, and is left to the appendix.}

\section{Conclusion}

In this paper we have defined the signal amplification problem for Thermodynamic Binding Networks, and
we have demonstrated a TBN that achieves exponential signal amplification.
We also showed a doubly-exponential upper bound for the problem.
As TBNs model mixtures of DNA, a TBN that amplifies signal can potentially be implemented as a real system. An upper bound has implications for how effective a system designed in this way can potentially be, and shows that there are some limitations for a purely thermodynamic approach to signal detection and amplification. 

One clear direction for future work is to implement such a system. This would involve creating a design that accounts for the simplifications of the TBN model.
In particular, 
enthalpy and entropy need to be strong enough with enthalpy sufficiently stronger than entropy.
Further, the polymers formed need to be geometrically feasible. 
We have done some work to make this problem geometrically realizable with the inclusion of translator gadgets in \cref{subsec:translators}. In principle, the polymers that are formed in this version of the system are simple enough that they should form if the DNA strands implementing them are well-designed. 

Another goal would be to bridge the gap between our singly exponential amplifier and doubly exponential upper bound by either describing a TBN that can amplify signal more than exponentially, or deriving a more precise upper bound. If one wished to construct a TBN with doubly exponential amplification, an examination of our upper bound proof will show that such a TBN must have an exponentially sized polymer basis, and most likely would need to actually use an exponential amount of different polymer types in its stable configurations either with or without the analyte. Such a design seems relatively unlikely to come to fruition, and it seems more likely that our proof technique or similar techniques can be tightened in order to show a stricter upper bound. Thus, we conjecture that the true upper bound is (singly) exponential.

There are also other types of robustness that we have not discussed in this work that merit further analysis. One of these is input specificity: the question of how well the system amplifies signal if the analyte is changed slightly. Another is sensitivity to the number of copies of each component. Intuitively, our system's behavior depends on having exactly equal numbers of complementary strands within each layer; if there are too many copies of one, it may result in those excess copies spuriously propagating or blocking signal to the next layer. This issue may be intrinsic to thermodynamic signal amplifiers, or there may be some system more robust to it. Lastly, it may be experimentally useful to show that our system achieves its stable states not only in the limit of thermodynamic equilibrium, but also more practically when annealed. Some systems such as HCR are designed to reach non-equilibrium, meta-stable states when annealed. 
We conjecture that our system should reach equilibrium when annealed, because kinetic traps in the system are far away from being thermodynamically stable (large entropy gap). 
Formally studying annealing could be done by analyzing versions of the TBN model with different tradeoffs between entropy and enthalpy to model different temperatures.



\bibliography{main}

\begin{thebibliography}{10}

\bibitem{blair1982ip}
C.~E. Blair and R.~G. Jeroslow.
\newblock The value function of an integer program.
\newblock {\em Mathematical Programming}, 23(1):237--273, Dec 1982.
\newblock \href {https://doi.org/10.1007/BF01583794}
  {\path{doi:10.1007/BF01583794}}.

\bibitem{breik2021kinetic}
Keenan Breik, Cameron Chalk, David Doty, David Haley, and David Soloveichik.
\newblock Programming substrate-independent kinetic barriers with thermodynamic
  binding networks.
\newblock {\em IEEE/ACM Transactions on Computational Biology and
  Bioinformatics}, 18(1):283--295, 2021.
\newblock \href {https://doi.org/10.1109/TCBB.2019.2959310}
  {\path{doi:10.1109/TCBB.2019.2959310}}.

\bibitem{breik2019computing}
Keenan Breik, Chris Thachuk, Marijn Heule, and David Soloveichik.
\newblock Computing properties of stable configurations of thermodynamic
  binding networks.
\newblock {\em Theoretical Computer Science}, 785:17--29, 2019.

\bibitem{choi2018third}
Harry~MT Choi, Maayan Schwarzkopf, Mark~E Fornace, Aneesh Acharya, Georgios
  Artavanis, Johannes Stegmaier, Alexandre Cunha, and Niles~A Pierce.
\newblock Third-generation in situ hybridization chain reaction: Multiplexed,
  quantitative, sensitive, versatile, robust.
\newblock {\em Development}, 145(12):dev165753, 2018.

\bibitem{doty2017tbn}
David Doty, Trent~A. Rogers, David Soloveichik, Chris Thachuk, and Damien
  Woods.
\newblock Thermodynamic binding networks.
\newblock In Robert Brijder and Lulu Qian, editors, {\em DNA Computing and
  Molecular Programming}, pages 249--266, Cham, 2017. Springer International
  Publishing.

\bibitem{haley2020tbn}
David Haley and David Doty.
\newblock Computing properties of thermodynamic binding networks: An integer
  programming approach.
\newblock In Matthew~R. Lakin and Petr \v{S}ulc, editors, {\em DNA 2021:
  Proceedings of the 27th International Meeting on DNA Computing and Molecular
  Programming}, volume 205 of {\em Leibniz International Proceedings in
  Informatics (LIPIcs)}, pages 2:1--2:16, Dagstuhl, Germany, 2021. Schloss
  Dagstuhl -- Leibniz-Zentrum f{\"u}r Informatik.
\newblock URL: \url{https://drops.dagstuhl.de/opus/volltexte/2021/14669}, \href
  {https://doi.org/10.4230/LIPIcs.DNA.27.2}
  {\path{doi:10.4230/LIPIcs.DNA.27.2}}.

\bibitem{lopez2018molecular}
Randolph Lopez, Ruofan Wang, and Georg Seelig.
\newblock A molecular multi-gene classifier for disease diagnostics.
\newblock {\em Nature chemistry}, 10(7):746--754, 2018.

\bibitem{minev2021robust}
Dionis Minev, Christopher~M Wintersinger, Anastasia Ershova, and William~M
  Shih.
\newblock Robust nucleation control via crisscross polymerization of highly
  coordinated {DNA} slats.
\newblock {\em Nature communications}, 12(1):1741, 2021.

\bibitem{schochetman1988pcr}
Gerald Schochetman, Chin-Yih Ou, and Wanda~K. Jones.
\newblock Polymerase chain reaction.
\newblock {\em The Journal of Infectious Diseases}, 158(6):1154--1157, 1988.
\newblock URL: \url{http://www.jstor.org/stable/30137034}.

\bibitem{xiong2021minimizing}
Erhu Xiong, Dongbao Yao, Andrew~D. Ellington, and Sanchita Bhadra.
\newblock Minimizing leakage in stacked strand exchange amplification circuits.
\newblock {\em ACS Synthetic Biology}, 10(6):1277--1283, 2021.
\newblock PMID: 34006090.
\newblock \href
  {http://arxiv.org/abs/https://doi.org/10.1021/acssynbio.0c00615}
  {\path{arXiv:https://doi.org/10.1021/acssynbio.0c00615}}, \href
  {https://doi.org/10.1021/acssynbio.0c00615}
  {\path{doi:10.1021/acssynbio.0c00615}}.

\end{thebibliography}

\appendix
\end{document}